\newtheorem{thm}{Theorem}
\newtheorem{lem}{Lemma}
\newtheorem{prop}[lem]{Proposition}
\newtheorem{defn}[lem]{Definition}
\newtheorem*{thm*}{Theorem}
\numberwithin{equation}{section}
\newcommand{\inbrac}[1]{\left[ #1 \right]}
\newcommand{\Hol}{\mathbf{Hol}}
\begin{document}
\title{A Categorical Equivalence between Generalized Holonomy Maps on a Connected Manifold and Principal Connections on Bundles over that Manifold}
\author{Sarita Rosenstock}
\email{rosensts@uci.edu}
\author{James Owen Weatherall}
\email{weatherj@uci.edu}
\affiliation{Department of Logic and Philosophy of Science, University of California--Irvine, 3151 Social Science Plaza A, Irvine, CA 92697-5100}
\begin{abstract}
A classic result in the foundations of Yang-Mills theory, due to J. W. Barrett [``Holonomy and Path Structures in General Relativity and Yang-Mills Theory.'' \emph{Int. J. Th. Phys.} \textbf{30}(9), (1991)], establishes that given a ``generalized'' holonomy map from the space of piece-wise smooth, closed curves based at some point of a manifold to a Lie group, there exists a principal bundle with that group as structure group and a principal connection on that bundle such that the holonomy map corresponds to the holonomies of that connection.  Barrett also provided one sense in which this ``recovery theorem'' yields a unique bundle, up to isomorphism.  Here we show that something stronger is true: with an appropriate definition of isomorphism between generalized holonomy maps, there is an equivalence of categories between the category whose objects are generalized holonomy maps on a smooth, connected manifold and whose arrows are holonomy isomorphisms, and the category whose objects are principal connections on principal bundles over a smooth, connected manifold.  This result clarifies, and somewhat improves upon, the sense of ``unique recovery'' in Barrett's theorems; it also makes precise a sense in which there is no loss of structure involved in moving from a principal bundle formulation of Yang-Mills theory to a holonomy, or ``loop'', formulation.
\end{abstract}
\maketitle

\section{Introduction}\label{sec-introduction}

There are two mathematical formalisms that are widely used for Yang-Mills theory.  One, well-known among physicists since the publication of the so-called ``Wu-Yang dictionary'',\citep{Wu+Yang} is the formalism of principal connections on principal bundles.\citep{Kobayashi+Nomizu, Trautman, Palais, Bleecker, Gockeler+Schuker, Deligne+Freed, WeatherallYM}  On this approach, a ``gauge field'' or ``Yang-Mills potential'' is a principal connection on some principal bundle over a relativistic spacetime, perhaps represented, in a section-dependent way, as the pullback of a connection one-form along a local section of the bundle; and the ``field strength'' is the curvature of this connection, again represented relative to some section of the principal bundle.  The choice of a section relative to which one represents these fields on spacetime corresponds to choosing a ``gauge''.  A second approach, attractive because it appears to do away with gauge-dependent potentials, is the formalism of ``loops'' or holonomies.\citep{Mandelstam, Barrett, Loll, Gambini+Pullin}  Here one directly associates closed, piece-wise smooth curves on spacetime with elements of some group, representing features of propagation along such curves, such as the phase-shifts in interference experiments associated with those closed curves.\footnote{Yet another approach is to characterize gauge theories directly using a generalized notion of the parallel transport maps induced by principal bundles with connections. See \citet{Schreiber+Waldorf} and \cite{Dumitrescu}. }

Given a principal connection on a principal bundle, one can immediately calculate the holonomies of that connection, relative to some point in the total space.  Conversely, a pair of classic results, due to Barrett,\citep{Barrett}  show that there is a certain sense in which, given appropriate ``holonomy data'' on a manifold $M$, there always exists a principal bundle over $M$ and a principal connection on that bundle such that the holonomy data arises as the holonomies of that connection, and that this bundle is, in a sense to be explained, unique.\footnote{\citet{Caetano+Picken} prove an analogous result using a different definition of loop space, which they take to eliminate certain disadvantages of Barrett's construction. The results presented here do not depend on which definition is used.}  Here we show that something stronger is true.  Given an appropriate notion of isomorphism between assignments of holonomy data, Barrett's reconstruction theorem gives the action on objects of a functor realizing a categorical equivalence between a category whose objects consist in specifications of holonomy data and whose arrows are holonomy isomorphisms and a category whose objects are principal connections on principal bundles over connected manifolds and whose arrows are connection-preserving principal bundle isomorphisms.

More precisely, let $M$ be a smooth, connected, paracompact Hausdorff manifold.\footnote{In what follows, we limit attention to manifolds that are smooth, Hausdorff, and paracompact, and will no longer state these assumptions explicitly.}  We will use $\bullet$ to denote reparameterized composition of curves with compatible endpoints, so that given two curves $\gamma_1:[0,1]\rightarrow M$ and $\gamma_2:[0,1]\rightarrow M$ such that $\gamma_2(0)=\gamma_1(1)$, we produce a curve $\gamma_2\bullet\gamma_1:[0,1]\rightarrow M$.\footnote{Everything discussed here is independent of the particular choice of reparameterization of the composition. The standard method is to define $\gamma_2 \bullet \gamma_1 (t) =  \begin{cases}
       \gamma(2t)  &  \text{ for } t \leq \frac{1}{2} \\
       \gamma(2(t-\frac{1}{2}) &  \text{ for } t \geq \frac{1}{2}
     \end{cases}$}  Given a curve $\gamma:[0,1]\rightarrow M$, meanwhile, we will take $\gamma^{-1}:[0,1]\rightarrow M$ to be the curve whose image is the same as $\gamma$'s, but whose orientation is reversed.  We will say that two curves $\gamma_1,\gamma_2:[0,1]\rightarrow M$ are \emph{thinly equivalent}, written $\gamma_1\sim\gamma_2$, if they agree on both endpoints and there exists a homotopy $h$ of $\gamma_1^{-1}\bullet\gamma_2$ to the null curve $id_{\gamma(0)}:[0,1]\rightarrow\gamma_1(0)$ such that the image of $h$ is included in the image of $\gamma_1^{-1}\bullet\gamma_2$.  Now let $x$ be some point of $M$ and denote by $L_x$ the collection of piece-wise smooth curves $\gamma:[0,1]\rightarrow M$ satisfying $\gamma(0)=\gamma(1)=x$.  A \emph{smooth finite-dimensional family of loops} at $x$ is a map $\tilde{\psi}:U\rightarrow L_x$, where $U$ is an open subset of $\mathbb{R}^n$ for any $n$, which is smooth in the sense that the associated map $\psi:U\times [0,1]\rightarrow M$ defined by $(u,t)\mapsto \tilde{\psi}[u](t)$ is continuous and smooth on subintervals $U\times[i_k,i_{k+1}]$, where $i_0=0<  i_i < \ldots < i_m = 1$ for some finite $m$.

With this background, we define the holonomy data mentioned above.  Let $M$ be a smooth manifold, let $G$ be a Lie group, and let $x$ be some point in $M$.  Then a \emph{generalized holonomy map} on $M$ with reference point $x$ and structure group $G$ is a map $H:L_x\rightarrow G$ satisfying the following properties: (1) for any $\gamma,\gamma'\in L_x$, if $\gamma$ and $\gamma'$ are thinly equivalent, then $H(\gamma)=H(\gamma')$; (2) for any $\gamma,\gamma'\in L_x$, $H(\gamma\bullet\gamma')=H(\gamma)H(\gamma')$; and (3) for any smooth finite-dimensional family of loops $\tilde{\psi}:U\rightarrow L_x$, the composite map $H\circ\tilde{\psi}:U\rightarrow L_x\rightarrow G$ is smooth.  For present purposes, the specification of a manifold $M$ and a generalized holonomy map $H:L_x\rightarrow G$, for some Lie group $G$ and point $x\in M$, constitutes a full specification of holonomy data; we will call the pair $(M,H)$ a \emph{holonomy model} for Yang-Mills theory.  (Note that we say nothing, here, of the dynamical relationship between $H$ and any distribution of charged matter.)  Barrett's results can then be stated as follows.

\begin{thm*}[Barrett reconstruction theorem] Fix a connected manifold $M$, a Lie group $G$, and a point $x\in M$, and let $H: L_{x} \rightarrow G$ be a generalized holonomy map. Then there exists a principal bundle $G \rightarrow P \overset{\pi}{\rightarrow} M$, a connection $\Gamma$ on $P$, and an element $u \in \pi^{-1}[x]$ such that $H = H_{\Gamma, u}$, where $H_{\Gamma, u}:L_x\rightarrow G$ is the holonomy map through $u$ determined by $\Gamma$.\footnote{Note that the (generalized) holonomy maps $H_{\Gamma, u}$ determined in this way really do depend on the choice of $u\in P$, even for connected manifolds; changing base point, even within a fiber, yields a holonomy map that is conjugate in $G$ to the one we began with, so if $u_2=u_1 g$, then $H_{\Gamma,u_2}(\gamma)=g^{-1}H_{\Gamma,u_1}(\gamma)g$ for any $\gamma\in L_{\pi(u_1)}$.  In the sequel, we make precise a sense in which these are nonetheless isomorphic holonomy maps.}
\end{thm*}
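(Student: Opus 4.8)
The plan is to carry out the classical ``bundle of paths'' construction (this is, in essence, Barrett's argument): one builds the total space $P$ out of piecewise-smooth curves emanating from $x$, extracts the smooth structure from property (3), and reads off the connection from an evident lifting operation. Two elementary consequences of (1)--(2) are used throughout: since $id_x\bullet id_x\sim id_x$ and $\gamma\bullet\gamma^{-1}\sim id_x$ (in each case the obvious homotopy retracts inside the image), one has $H(id_x)=e$ and $H(\gamma^{-1})=H(\gamma)^{-1}$ for every loop $\gamma\in L_x$. For \emph{Step 1}, let $\mathcal{P}_x$ denote the set of piecewise-smooth $\gamma:[0,1]\to M$ with $\gamma(0)=x$, and put a relation on $\mathcal{P}_x\times G$ by declaring $(\gamma,g)\sim(\gamma',g')$ iff $\gamma(1)=\gamma'(1)$ and $g'=H\big((\gamma')^{-1}\bullet\gamma\big)\,g$. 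Using the two facts above together with (1)--(2) --- in particular that $\big((\gamma'')^{-1}\bullet\gamma'\big)\bullet\big((\gamma')^{-1}\bullet\gamma\big)$ is thinly equivalent to $(\gamma'')^{-1}\bullet\gamma$ --- one checks this is an equivalence relation. Set $P:=(\mathcal{P}_x\times G)/{\sim}$, let $\pi([\gamma,g])=\gamma(1)$, and let $G$ act on the right by $[\gamma,g]\cdot h=[\gamma,gh]$. Fixing one curve from $x$ to $y$ for each $y\in M$ shows the action is free and transitive on each fiber $\pi^{-1}[y]$.

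\emph{Step 2: the smooth principal bundle structure.} Since $M$ is connected, for each $y_0\in M$ choose a piecewise-smooth curve $\alpha$ from $x$ to $y_0$ and a chart $U\ni y_0$ carrying a smooth family $z\mapsto\tau_z$ of curves from $y_0$ to $z$ (e.g.\ coordinate segments, which are jointly smooth); set $\sigma_z:=\tau_z\bullet\alpha$ and define $\Phi_U:\pi^{-1}[U]\to U\times G$ by $\Phi_U([\gamma,g])=\big(\gamma(1),\,H(\sigma_{\gamma(1)}^{-1}\bullet\gamma)\,g\big)$. Properties (1)--(2) make $\Phi_U$ well defined, $G$-equivariant, and fiberwise bijective, with inverse $(z,g)\mapsto[\sigma_z,g]$. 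For two such charts the transition map over $U\cap U'$ is $z\mapsto H\big((\sigma'_z)^{-1}\bullet\sigma_z\big)$; since, in local coordinates, $z\mapsto (\sigma'_z)^{-1}\bullet\sigma_z$ is a smooth finite-dimensional family of loops (the partition points can be taken independent of $z$), property (3) makes this map smooth. Hence the $\Phi_U$ are the trivializations of a smooth principal bundle $G\to P\overset{\pi}{\to}M$, and since the transition maps between \emph{any} two such charts are smooth, the resulting smooth structure is independent of the auxiliary choices.

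\emph{Step 3: the connection, and recovery of $H$.} For a piecewise-smooth $c:[0,1]\to M$ with $c(0)=\pi([\gamma,g])$, define the lift $t\mapsto\big[(c|_{[0,t]})\bullet\gamma,\,g\big]$ through $[\gamma,g]$; properties (1)--(2) show this is independent of the chosen representative and is functorial under concatenation and reparametrization and equivariant under $G$, and in a trivialization $\Phi_U$ its $G$-component is $t\mapsto H\big(\sigma_{c(t)}^{-1}\bullet(c|_{[0,t]})\bullet\gamma\big)\,g$, which by property (3) depends smoothly on $t$, on the starting point, and on $c$ (within any smooth finite-dimensional family of curves). Declaring these lifts horizontal, the subspaces $H_p\subset T_pP$ spanned by their initial velocities form a smooth, right-invariant distribution complementing the vertical one of the right dimension --- i.e.\ a principal connection $\Gamma$ --- whose parallel transport is exactly the lifting operation above (the standard correspondence between principal connections and smooth parallel transport; see, e.g., \citep{Kobayashi+Nomizu}). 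Finally take $u:=[id_x,e]\in\pi^{-1}[x]$: for $\gamma\in L_x$ the horizontal lift through $u$ is $t\mapsto[\gamma|_{[0,t]},e]$, which at $t=1$ is $[\gamma,e]$; since $id_x^{-1}\bullet\gamma\sim\gamma$ we have $[\gamma,e]=[id_x,H(\gamma)]=u\cdot H(\gamma)$, so $H_{\Gamma,u}(\gamma)=H(\gamma)$, as required.

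\emph{The main obstacle.} Steps 1 and 2's equivalence-relation, equivariance, and transition-function bookkeeping is routine, as is the verification in Step 3 that the horizontal distribution has the right rank and complements the vertical once parallel transport is known to be smooth. The real content --- the only place the hypotheses do genuine work --- is producing, at each stage, honest smooth finite-dimensional families of \emph{piecewise}-smooth loops (with uniformly chosen partition points, obtained from coordinate-segment families and fixed connecting curves) so that property (3) can actually be invoked to conclude that the transition functions and the parallel-transport maps are smooth. Getting this family-of-loops bookkeeping exactly right, so that no smoothness is claimed that property (3) does not literally supply, is where the care is needed.
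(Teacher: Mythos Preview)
The paper does not actually prove this statement: the Barrett reconstruction theorem is stated as a known result, attributed to \citet{Barrett}, and is used as a black box in the proof of Theorem~\ref{thm:cat} (the functor $C$ is defined by invoking it on objects). There is therefore no ``paper's own proof'' to compare against.

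That said, your proposal is a faithful reconstruction of Barrett's original argument --- the path-bundle construction --- and the outline is correct. The equivalence relation in Step~1, the cocycle check in Step~2, and the holonomy computation $[\gamma,e]=[id_x,H(\gamma)]=u\cdot H(\gamma)$ in Step~3 are exactly the right ingredients, and you have correctly identified where the only real analytic content lies: arranging the auxiliary families of curves so that property~(3) genuinely applies to yield smoothness of transition functions and of parallel transport. One small point worth tightening in a full write-up is the passage in Step~3 from ``smooth parallel transport system'' to ``principal connection'': you assert the standard correspondence, but since your lifts are a priori only piecewise smooth in $t$, one should check (or cite precisely) that the resulting horizontal subspaces $H_p$ vary smoothly with $p$ --- this again comes down to applying property~(3) to a well-chosen family, as you note, but it is the step most often glossed over.
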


\begin{thm*}[Barrett representation theorem] The assignment of $(P, \Gamma, u)$ to generalized holonomy maps given in the above theorem is a bijection up to vertical principal bundle isomorphisms that preserve both the connection $\Gamma$ and the base point $u$.
\end{thm*}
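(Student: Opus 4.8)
The plan is to exhibit the assignment of the reconstruction theorem as one half of a pair of mutually inverse maps. Fix $M$, $G$, and $x$; let $\mathcal{H}$ denote the set of generalized holonomy maps $H:L_x\to G$, and let $\mathcal{T}$ denote the set of equivalence classes of triples $(P,\Gamma,u)$ — with $\pi:P\to M$ a principal $G$-bundle, $\Gamma$ a connection, and $u\in\pi^{-1}[x]$ — under vertical principal bundle isomorphisms that preserve $\Gamma$ and $u$. The assignment in the reconstruction theorem is a map $\Theta:\mathcal{H}\to\mathcal{T}$: it sends $H$ to the class of the quotient $P_H$ of $\{(\gamma,g):\gamma(0)=x,\ g\in G\}$ by the relation $(\gamma_1,g_1)\sim(\gamma_2,g_2)$ iff $\gamma_1(1)=\gamma_2(1)$ and $g_2=H(\gamma_2^{-1}\bullet\gamma_1)g_1$, carrying the $G$-action $[\gamma,g]\cdot h=[\gamma,gh]$, the connection $\Gamma_H$ whose horizontal lifts are obtained by extending paths, and the base point $u_H=[id_x,e]$. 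Define $\Lambda:\mathcal{T}\to\mathcal{H}$ by $\Lambda[(P,\Gamma,u)]=H_{\Gamma,u}$. I will show that $\Lambda$ is well defined and that $\Theta$ and $\Lambda$ are inverse to one another; well-definedness of $\Lambda$ delivers the ``only if'' content of the stated bijection, $\Lambda\circ\Theta=\mathrm{id}$ is immediate from the reconstruction theorem, and $\Theta\circ\Lambda=\mathrm{id}$ is its ``if'' content.

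Well-definedness of $\Lambda$ is the straightforward direction. If $\Phi:(P_1,\Gamma_1,u_1)\to(P_2,\Gamma_2,u_2)$ is a vertical isomorphism preserving connection and base point, then for any $\ell\in L_x$ the image under $\Phi$ of the $\Gamma_1$-horizontal lift of $\ell$ through $u_1$ is a lift of $\ell$ (since $\Phi$ covers the identity on $M$) that is $\Gamma_2$-horizontal (since $\Phi$ carries $\Gamma_1$-horizontal vectors to $\Gamma_2$-horizontal ones) and begins at $\Phi(u_1)=u_2$; by uniqueness of horizontal lifts it is the $\Gamma_2$-horizontal lift of $\ell$ through $u_2$, and comparing terminal points via $\Phi(u_1 g)=\Phi(u_1)g$ gives $H_{\Gamma_1,u_1}(\ell)=H_{\Gamma_2,u_2}(\ell)$. (That each $H_{\Gamma,u}$ is a generalized holonomy map in the first place — thin-equivalence invariance, the $\bullet$-homomorphism property, and smoothness on smooth finite-dimensional families of loops via smooth parameter-dependence of horizontal transport — is the standard verification, which I would record once.)

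The substance is $\Theta\circ\Lambda=\mathrm{id}$: given an arbitrary $(P,\Gamma,u)$ and writing $H:=H_{\Gamma,u}$, I must produce a vertical, connection- and base-point-preserving isomorphism $\Phi:P_H\to P$. Set $\Phi([\gamma,g])=\overline{\gamma}(1)\cdot g$, where $\overline{\gamma}$ is the $\Gamma$-horizontal lift of $\gamma$ starting at $u$. The one fact that matters is that the $\Gamma$-horizontal lift of $\gamma_2^{-1}\bullet\gamma_1$ through $u$ terminates at $u\cdot H(\gamma_2^{-1}\bullet\gamma_1)$: lifting $\gamma_1$ and then $\gamma_2^{-1}$ amounts to reversing the lift of $\gamma_2$ started at a right-translate of its own endpoint, so if $\overline{\gamma_1}(1)=\overline{\gamma_2}(1)\cdot k$ then $k=H_{\Gamma,u}(\gamma_2^{-1}\bullet\gamma_1)=H(\gamma_2^{-1}\bullet\gamma_1)$ — and this is exactly what makes $\Phi$ descend to $\sim$-classes. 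Then $G$-equivariance, $\pi\circ\Phi=\pi_H$, and $\Phi(u_H)=u$ are immediate, injectivity is the same fact read backwards, and surjectivity uses that in the connected manifold $M$ every point is joined to $x$ by a piece-wise smooth path, so that $\Phi$ meets every fiber of $P$. Finally $\Phi$ preserves the connection: the $\Gamma_H$-horizontal lift of a path $\delta$ through $[\gamma,g]$ is $t\mapsto[(\delta|_{[0,t]})\bullet\gamma,\,g]$, whose $\Phi$-image is, by functoriality of horizontal lifting under concatenation, the $\Gamma$-horizontal lift of $\delta$ through $\Phi([\gamma,g])$; since a principal connection is recovered from the family of its horizontal lifts, $\Phi$ intertwines $\Gamma_H$ with $\Gamma$.

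I expect the main obstacle to be upgrading $\Phi$ from a bijection to a diffeomorphism. Since a smooth, $G$-equivariant bundle map over the identity between principal bundles is automatically an isomorphism of principal bundles, it suffices to show $\Phi$ is smooth; this is precisely where axiom (3) on generalized holonomy maps is indispensable and where it has to be matched against the smooth structure that the reconstruction theorem puts on $P_H$. Locally over a point $y\in M$ one chooses a smooth finite-dimensional family of paths from $x$ to the points of a coordinate chart around $y$ — a fixed path to $y$ followed by short paths within the chart — which at once gives a local trivialization of $P_H$ near the fiber over $y$ and, because horizontal lifts depend smoothly on the lifted path (smooth dependence of solutions of the lift equation on parameters), makes the local coordinate expression of $\Phi$ smooth; the same families handle the inverse. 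This smoothness bookkeeping is the delicate point in Barrett's construction — and part of the reason Caetano and Picken work with a different loop space — so I would isolate it as a lemma, reducing it to smooth parameter-dependence of horizontal transport together with the local families of loops guaranteed by the definition of a generalized holonomy map.
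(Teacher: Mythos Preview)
The paper does not contain a proof of this statement. Both the Barrett reconstruction theorem and the Barrett representation theorem are stated as results due to Barrett (1991) and are invoked as background; the paper's own contributions are Theorems~1 and~2, which build on Barrett's theorems rather than re-derive them. There is thus no proof in the paper against which to compare your proposal.

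That said, your sketch is essentially the standard argument and is correct in outline. The map $\Phi([\gamma,g])=\overline{\gamma}(1)\cdot g$ is the canonical one, and your check that it descends to $\sim$-classes, is $G$-equivariant, fiber-preserving, bijective, base-point-preserving, and connection-preserving is right. You have also correctly identified the one genuinely nontrivial step: upgrading the bijection to a diffeomorphism, which is exactly where the smooth structure Barrett places on $P_H$ and axiom~(3) on generalized holonomy maps are used. Your plan to localize via smooth finite-dimensional families of paths and to invoke smooth parameter-dependence of horizontal transport is the right one. It is worth noting that the paper's proof of Theorem~1 uses a closely related construction---defining a bundle map by transporting a chosen base point along curves and checking well-definedness via the holonomy data---so although the paper does not prove the representation theorem itself, the techniques you employ are very much in the spirit of the arguments the paper does give.
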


Barrett's reconstruction theorem effectively establishes that holonomy data is sufficient to reconstruct a model of Yang-Mills theory in the sense of a principal connection on a principal bundle; the representation theorem, meanwhile, gives one sense in which this reconstruction is unique. But one might hope for something more regarding the uniqueness of the reconstruction.  In particular, on Barrett's approach, everything is done relative to fixed points $x\in M$ at which the closed curves are based and $u\in \pi^{-1}[x]$ at which the holonomies are based; nothing is said about the relationship between holonomy models associated with different base points, even though the base points play no role in the physics of Yang-Mills theory.\footnote{For another version of this worry, used to question the significance of Barrett's result, see \citet{Healey}.}  Moreover, the form of Barrett's results is highly suggestive: it appears that the relationship between holonomy maps and principal connections, properly construed, should be functorial.  Establishing this stronger result is the goal of the present paper.

In particular, we prove the following.  Let $\mathbf{PC}$ be the category (actually, groupoid) of principal connections on principal bundles over connected manifolds, with connection-preserving principal bundle isomorphisms as arrows, and let $\mathbf{Hol}$ be the category (or rather, again, groupoid) of holonomy models (as defined above) on connected manifolds, with ``holonomy isomorphisms'', to be defined in section \ref{sec:isomorphism}, as arrows.

\setcounter{thm}{1}
\begin{thm}\label{thm:cat} $\mathbf{Hol}$ and $\mathbf{PC}$ are equivalent as categories, with an equivalence that preserves empirical content in the sense of preserving holonomy data.\end{thm}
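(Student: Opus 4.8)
The plan is to construct an equivalence of categories by building functors in both directions and exhibiting natural isomorphisms witnessing that their composites are naturally isomorphic to the identities. In one direction, I would define a functor $\mathcal{H}:\PC\to\Hol$ sending a principal connection $(P\overset{\pi}{\to}M,\Gamma)$ together with a chosen point $u\in P$ to the holonomy model $(M,H_{\Gamma,u})$; the content of Barrett's reconstruction theorem guarantees this lands in $\Hol$. The more delicate issue here is that $\PC$ as described has no distinguished base point $u$ in its objects, whereas $H_{\Gamma,u}$ depends on $u$. I expect the right move is to observe that for a \emph{connected} base manifold any two choices of $u$ (in the same or different fibers) yield holonomy maps that are related by an isomorphism in $\Hol$ — this is precisely what the ``holonomy isomorphism'' notion in section~\ref{sec:isomorphism} must be engineered to capture (absorbing both conjugation within a fiber and change of base point, paralleling the footnote after the reconstruction theorem) — so that $\mathcal{H}$ is well-defined up to natural isomorphism, or alternatively so that it can be made strictly functorial after a harmless choice. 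On arrows, a connection-preserving bundle isomorphism $\Phi:(P,\Gamma)\to(P',\Gamma')$ carries $u$ to $\Phi(u)$ and intertwines parallel transport, hence induces the requisite holonomy isomorphism; functoriality is then a routine check.

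In the other direction, I would define $\mathcal{P}:\Hol\to\PC$ using Barrett's explicit reconstruction: from $(M,H)$ with $H:L_x\to G$, build the bundle $P_H$ as a quotient of (piecewise-smooth paths from $x$)$\times G$ by the relation identifying $(\gamma,g)$ with $(\gamma',H(\gamma'^{-1}\bullet\gamma_{\text{closed part}})g)$ appropriately — i.e., the standard path-space construction — equip it with the connection $\Gamma_H$ whose horizontal paths are the classes of constant-$G$ lifts, and take the object $(P_H\to M,\Gamma_H)$. The smoothness clause (3) in the definition of a generalized holonomy map is exactly what is needed to see that $P_H$ is a smooth manifold and $\Gamma_H$ a smooth connection; I would lean on Barrett's paper for these verifications rather than redo them. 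On arrows, I must show a holonomy isomorphism between $(M,H)$ and $(M',H')$ induces a connection-preserving bundle isomorphism $P_H\to P_{H'}$; since a holonomy isomorphism will be built from a diffeomorphism $f:M\to M'$ together with group and base-point data relating $H$ and $H'\circ f_*$, the induced map on path spaces descends to the quotients and preserves horizontality essentially by construction.

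The two natural isomorphisms then have to be produced. For $\mathcal{H}\circ\mathcal{P}\cong\mathrm{id}_{\Hol}$, one computes that the holonomy map of the reconstructed connection $\Gamma_H$ based at the class of the constant path is $H$ itself — this is the ``$H=H_{\Gamma,u}$'' clause of the reconstruction theorem — so the component at $(M,H)$ is (essentially) the identity, and naturality is a diagram chase. For $\mathcal{P}\circ\mathcal{H}\cong\mathrm{id}_{\PC}$, given $(P,\Gamma)$ (with chosen $u$) one has a canonical map from the path-space bundle $P_{H_{\Gamma,u}}$ to $P$ sending the class of $\gamma$ to the endpoint of the $\Gamma$-horizontal lift of $\gamma$ starting at $u$; Barrett's representation theorem, which says the assignment $(P,\Gamma,u)\mapsto H_{\Gamma,u}$ is a bijection up to vertical connection- and base-point-preserving isomorphism, is exactly what makes this map an isomorphism in $\PC$, and one checks naturality in the bundle morphism.

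I expect the main obstacle to be not any single computation but the bookkeeping around base points: $\PC$'s objects lack the base point $u$ that all the holonomy constructions require, while $\Hol$'s objects carry a base point $x\in M$ and an implicit identification of $G$ that a bare bundle does not privilege. Threading this needle is presumably the entire job of the ``holonomy isomorphism'' definition in section~\ref{sec:isomorphism} — it must be exactly coarse enough to quotient out the $u$-dependence and the conjugation ambiguity (so that the footnote's conjugate holonomy maps become genuinely isomorphic), yet fine enough that the resulting groupoid $\Hol$ is not collapsed too far to match $\PC$. Once that definition is in hand, verifying that $\mathcal{H}$ and $\mathcal{P}$ respect it, and that the natural isomorphisms above have well-defined components independent of the auxiliary choices, is the real substance; the rest is Barrett's two theorems repackaged functorially. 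I would also take care to check that the equivalence ``preserves empirical content in the sense of preserving holonomy data'' — meaning the composite $\PC\to\Hol$ followed by evaluation recovers the same holonomy assignments — which falls out immediately from the construction of $\mathcal{H}$.
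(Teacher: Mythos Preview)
Your plan is correct in outline but takes a different organizational route from the paper. You propose to build functors in both directions and exhibit natural isomorphisms $\mathcal{H}\circ\mathcal{P}\cong\mathrm{id}_{\Hol}$ and $\mathcal{P}\circ\mathcal{H}\cong\mathrm{id}_{\PC}$. The paper instead constructs only the single functor $C:\Hol\to\PC$ (your $\mathcal{P}$) and proves it is full, faithful, and essentially surjective. This buys the paper two things. First, it sidesteps entirely the base-point bookkeeping you correctly identify as the main obstacle: since $C$ goes from $\Hol$ (where a base point is part of the data) to $\PC$ (where it is not), no arbitrary choice of $u\in P$ is ever required; essential surjectivity is then a one-line appeal to the fact that any $(P,\Gamma)$ arises as $C(H_{\Gamma,u})$ for \emph{some} $u$. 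Second, the faithfulness check is precisely where the coarser equivalence relation $\underline{\alpha}$ defining $\Hol$ (as opposed to the thin-equivalence classes defining $\Hol^*$) does its work, and isolating this as a direct verification makes that role transparent.

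Where the real content lives also differs. In the paper, the action of $C$ on arrows is \emph{defined} to be the connection-preserving principal bundle isomorphism $(F,\Psi,\phi)$ produced in the proof of Theorem~\ref{thm:PBiso}; that theorem (which passes through holonomy sub-bundles and the Kobayashi--Nomizu Reduction Theorem, then extends via Lemma~\ref{lem:extend}) is therefore the engine of the whole argument, and fullness is established by showing every bundle isomorphism arises this way from a suitable $(\Psi,\underline{\alpha},\phi)$. Your sketch of $\mathcal{P}$ on arrows (``the induced map on path spaces descends to the quotients and preserves horizontality essentially by construction'') is working instead inside Barrett's explicit path-space model, which may indeed make that step lighter---but the difficulty then reappears in verifying naturality of $\mathcal{P}\circ\mathcal{H}\cong\mathrm{id}_{\PC}$ for an \emph{arbitrary} bundle, which is essentially the same comparison Theorem~\ref{thm:PBiso} carries out. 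Either way the substantive geometric work is the same; the paper's packaging is simply more economical.
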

Our proof of Theorem \ref{thm:cat} depends on the following result concerning the notion of holonomy isomorphism we will presently define. We take this result to be of some interest in its own right.\setcounter{thm}{0}
\begin{thm}\label{thm:PBiso} Let $G \rightarrow P \overset{\pi}{\rightarrow} M$ and $G' \rightarrow P' \overset{\pi'}{\rightarrow} M'$ be principal bundles with principal connections $\Gamma$ and $\Gamma'$ respectively, and suppose that $M$ and $M'$ are connected.  Suppose there are points $u\in P$ and $u'\in P'$ such that the holonomy maps based at $u$ and $u'$ are isomorphic.  Then there is a connection-preserving principal bundle isomorphism between $P$ and $P'$.
\end{thm}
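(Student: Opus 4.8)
The plan is to deduce Theorem~\ref{thm:PBiso} from Barrett's two theorems together with a naturality argument. Write $x = \pi(u)$ and $H = H_{\Gamma,u}$. The given holonomy isomorphism provides a diffeomorphism $f\colon M \to M'$ and a Lie group isomorphism $\phi\colon G \to G'$, together with whatever auxiliary data (a piece-wise smooth curve in $M'$ and/or a conjugating element of $G'$, according to the precise formulation) relates the holonomy map determined by $u'$ to $\phi\circ H$. Because $M'$ is connected, we may replace $u'$ by its $\Gamma'$-horizontal transport along that curve, followed by a suitable right translation by an element of $G'$; this replaces the holonomy map by a conjugate, and after it $u'$ lies over $f(x)$ and the holonomy map it determines --- again written $H'\colon L_{f(x)}\to G'$ --- satisfies $H'(f\circ\gamma)=\phi\bigl(H(\gamma)\bigr)$ for all $\gamma\in L_{x}$. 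By the Barrett reconstruction and representation theorems, $(P,\Gamma,u)$ is isomorphic, via a vertical principal bundle isomorphism preserving both connection and base point, to the bundle-with-connection $(\widehat{P}_{H},\widehat{\Gamma}_{H},\widehat{u}_{H})$ reconstructed from $H$, and likewise $(P',\Gamma',u')\cong(\widehat{P}_{H'},\widehat{\Gamma}_{H'},\widehat{u}_{H'})$. Hence it suffices to produce a connection-preserving principal bundle isomorphism $\widehat{P}_{H}\to\widehat{P}_{H'}$ covering $f$, and compose the three maps.

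With base points matched, recall that the reconstructed total space $\widehat{P}_{H}$ can be realized as the quotient of $\mathcal{C}_{x}\times G$ --- where $\mathcal{C}_{x}$ is the set of piece-wise smooth curves in $M$ issuing from $x$ --- by the relation identifying $(\gamma,g)$ with $(\gamma_{1},g_{1})$ exactly when $\gamma(1)=\gamma_{1}(1)$ and $g_{1}=H(\gamma_{1}^{-1}\bullet\gamma)\,g$, carrying the right $G$-action on the second factor, the projection $[\gamma,g]\mapsto\gamma(1)$, the base point $[\mathrm{id}_{x},e]$, and the $\widehat{\Gamma}_{H}$-horizontal lift of a curve $\delta$ (issuing from $\gamma(1)$) through $[\gamma,g]$ given by $t\mapsto[\delta|_{[0,t]}\bullet\gamma,\,g]$. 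Define $\widehat{\Phi}\colon\widehat{P}_{H}\to\widehat{P}_{H'}$ by $\widehat{\Phi}[\gamma,g]=[f\circ\gamma,\,\phi(g)]$. Applying the compatibility $H'(f\circ\lambda)=\phi(H(\lambda))$ to loops $\lambda=\gamma_{1}^{-1}\bullet\gamma$ and using that $f$ commutes with $\bullet$ and with curve-reversal shows that $\widehat{\Phi}$ respects the defining relation and so is well defined; it is plainly $\phi$-equivariant, covers $f$, and is inverted by the map built the same way from $(f^{-1},\phi^{-1})$; and it carries the horizontal lift $t\mapsto[\delta|_{[0,t]}\bullet\gamma,g]$ to $t\mapsto[(f\circ\delta)|_{[0,t]}\bullet(f\circ\gamma),\phi(g)]$, which is $\widehat{\Gamma}_{H'}$-horizontal, so $\widehat{\Phi}$ is connection-preserving. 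Equivalently --- and this is likely the tidier route for the body of the paper, as it produces the isomorphism explicitly --- one bypasses the reconstructed bundles and defines $\Phi\colon P\to P'$ directly: given $v\in P$, pick a piece-wise smooth curve $\gamma$ from $x$ to $\pi(v)$ (possible since $M$ is connected), let $\widetilde{\gamma}$ be its $\Gamma$-horizontal lift through $u$, write $v=\widetilde{\gamma}(1)\cdot g$ with $g\in G$ unique, and set $\Phi(v)=\widetilde{f\circ\gamma}(1)\cdot\phi(g)$, where $\widetilde{f\circ\gamma}$ is the $\Gamma'$-horizontal lift through $u'$; independence of the choice of $\gamma$ follows from the identity $\widetilde{\gamma}(1)=\widetilde{\gamma_{1}}(1)\cdot H(\gamma_{1}^{-1}\bullet\gamma)$ relating the horizontal lifts of two curves with common endpoints and from the compatibility of $\phi$ with the holonomy maps, while $\phi$-equivariance and the covering of $f$ are immediate, bijectivity uses connectedness of $M$ and $M'$, and connection-preservation follows because along a $\Gamma$-horizontal curve the ``group coordinate'' $g$ stays constant.

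The one step that is not pure bookkeeping --- and I expect it to be the main obstacle --- is smoothness of $\widehat{\Phi}$ (equivalently, of $\Phi$); granted this, $\widehat{\Phi}$ is a diffeomorphism, since its inverse is of the same form. In the direct picture one checks it locally: near any $p_{0}\in M$, choose a piece-wise smooth family of curves $\gamma_{p}$ from $x$ to $p$ depending smoothly on $p$ (a fixed curve to $p_{0}$, followed by a coordinate-straight segment from $p_{0}$ to $p$); then $p\mapsto\widetilde{\gamma_{p}}(1)$ and $p\mapsto\widetilde{f\circ\gamma_{p}}(1)$ are smooth local sections of $P$ and $P'$ --- this is exactly where the ``smooth finite-dimensional family of loops'' condition and the smooth dependence of horizontal lifts on their data are used, just as in Barrett's own smoothness arguments --- and in the induced trivializations $\Phi$ is simply $(p,g)\mapsto(f(p),\phi(g))$. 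In the reconstructed-bundle picture the same point is the assertion that Barrett's smooth structure on $\widehat{P}_{H}$ is built naturally enough from $M$ and its atlas that a diffeomorphism of the base lifts to a smooth map of the total spaces. Either way, this is the single place where one must reach inside Barrett's construction rather than merely cite its conclusions; everything else reduces to the composition law $H(\gamma\bullet\gamma')=H(\gamma)H(\gamma')$ and the invariance of $H$ under thin equivalence.
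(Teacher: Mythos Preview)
Your proof is correct, and your second (``direct'') route is very close in spirit to the paper's argument: both define the bundle map by parallel transport, verify well-definedness from the compatibility $\phi\circ H = H'\circ (f\circ -)$ on loops, and establish smoothness by producing smooth local sections from smooth families of curves emanating from the base point. The organizational difference is that the paper factors the construction through the holonomy sub-bundles $P_{\Gamma,u}\subset P$ and $P'_{\Gamma',u'}\subset P'$: it first builds the isomorphism $f$ on these reduced bundles (where every point is reached by a horizontal curve from the base point, so no ``$g$'' factor is needed in the definition), and then invokes a separate extension lemma to push $f$ out to an equivariant map $F$ on the full bundles via $F(pg)=f(p)\phi(g)$. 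Your direct construction collapses these two steps into one formula, which is cleaner but hides the role of the holonomy group; the paper's decomposition isolates a reusable lemma and makes the Reduction Theorem of Kobayashi--Nomizu do some of the structural work. For smoothness the paper uses geodesic exponential maps (with a chosen metric pushed forward by $\Psi$) rather than your coordinate-straight segments, but these serve the identical purpose of producing a smooth family of curves and hence a smooth local section. Your first route, reducing to an isomorphism of Barrett's reconstructed bundles $\widehat P_H\to\widehat P_{H'}$, is a genuinely different packaging that the paper does not use; it is valid but requires reaching into the explicit quotient model of $\widehat P_H$, whereas the paper's argument (and your second route) work intrinsically with $P$ and $P'$.
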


Note that the choice of a category of holonomy maps is not entirely straightforward, as there are several candidate notions for arrows between holonomy models.  Below we will identify two possible categories---$\mathbf{Hol}$, relative to which the theorem is stated, and $\mathbf{Hol}^*$---differing in their arrows, and show that they are related by a quotient functor that does not split.  For present purposes, we are agnostic as to which category is the ``right'' one, but find it expedient to prove our main theorem with $\mathbf{Hol}$, and then infer the analogue result for $\mathbf{Hol}^*$ as an immediate corollary.

We believe these results, taken together, substantially clarify the role of base points in Barrett's construction, by showing (1) how various changes of base point may be understood to induce an isomorphism of holonomy data and (2) that holonomy models related by holonomy isomorphisms in this sense correspond to isomorphic principal connections.  As we will discuss in the final section, Theorem \ref{thm:cat} also provides one sense in which there is no ``loss of structure'' involved in moving from the principal bundle formalism to the loop formalism (or vice-versa), despite claims by some that the latter is more parsimonious.\citep{Healey}

In the next section, we will discuss the possible choices of a category of holonomy maps and define the notion of holonomy isomorphism needed for Theorems \ref{thm:PBiso} and \ref{thm:cat}.  In the following two sections, we will prove Theorems \ref{thm:PBiso} and \ref{thm:cat}.  We conclude with a discussion of the interpretation of the results here, especially in connection with the Baez-Dolan-Bartels classification of forgetful functors,\footnote{See the discussion at http://math.ucr.edu/home/baez/qg-spring2004/discussion.html.} and a remark about how this work relates to recent results of \citet{Schreiber+Waldorf}.

\section{Holonomy isomorphism}\label{sec:isomorphism}

Consider a connected manifold $M$, a Lie group $G$, and a generalized holonomy map $H:L_x\rightarrow G$ for some point $x\in M$, in the sense defined above.  We are interested in developing a precise sense in which two such maps might be ``isomorphic'', in the sense of encoding the same physically relevant structure---i.e., the same ``holonomy data''.  To this end, we take the physically relevant structure of a generalized holonomy map to consist in the group theoretic structure of the assignments of elements of $G$ to piece-wise smooth closed curves in $M$.  This suggests that there are several ways in which two generalized holonomy maps might be understood to encode the same structure.  For one, consider diffeomorphic manifolds $M$ and $M'$.  Clearly, if $\Psi:M\rightarrow M'$ is a diffeomorphism, we can understand the generalized holonomy map $H':L_{\Psi(x)}\rightarrow G$ defined by $H'(\gamma)=H(\Psi^{-1}\circ\gamma)$ to encode the same holonomy data as $H$.  So we should take $H:L_x\rightarrow G$ and $H':L_{\psi(x)}\rightarrow G$ to be isomorphic if they are related by a diffeomorphism in this way.  Likewise, if $\phi:G\rightarrow G'$ is a Lie group isomorphism, the generalized holonomy map $H'=\phi\circ H:L_x\rightarrow G'$ may be understood to encode the same holonomy data as $H$, so we should take $H:L_x\rightarrow G$ and $H':L_x\rightarrow G'$ to be isomorphic if they are related by a Lie group isomorphism in this way.

There is a third sense in which two generalized holonomy maps may be understood to encode the same holonomy data, though it is somewhat more subtle to state.  The idea is that, as noted above, although a generalized holonomy map is defined relative to some base point $x\in M$, this base point plays no role in the physics.  Thus, we would like to understand generalized holonomy maps associated with different base points to encode the same data.  We do this as follows. Let $H: L_x \rightarrow G$ be as above and consider another point $y \in M$.  Let $\alpha$ be a piece-wise smooth curve in $M$ from $y$ to $x$. For all $\gamma \in L_x$, define $H_{\alpha}(\alpha^{-1} \bullet \gamma \bullet \alpha) := H(\gamma)$.  To extend $H_{\alpha}$ to all of $L_y$, recall that thinly equivalent curves must have the same holonomies. Thus for any $\gamma' \in L_y$, since $\gamma' \sim \alpha^{-1} \bullet \alpha \bullet \gamma' \bullet \alpha^{-1} \bullet \alpha$, $H_{\alpha}(\gamma') = H_{\alpha}( \alpha^{-1} \bullet \alpha \bullet \gamma' \bullet \alpha^{-1} \bullet \alpha ) = H(\alpha \bullet \gamma' \bullet \alpha^{-1})$.  There is thus a natural sense in which, relative to $\alpha$, $H$ and $H_{\alpha}$ may be understood to encode the same holonomy data.  In other words, given generalized holonomy maps $H:L_x\rightarrow G$ and $H':L_y\rightarrow G$, we should take $H$ and $H'$ to be isomorphic if there exists some piece-wise smooth curve $\alpha:[0,1]\rightarrow M$, with $\alpha(0) = y$ and $\alpha(1)=x$, such that $H'=H_{\alpha}$.

In connection with first two senses of isomorphism between generalized holonomy maps, it is natural to associate the induced holonomy isomorphism with a given choice of diffeomorphism or Lie group isomorphism.  In the third case, it is tempting to do likewise: that is, to associate a map with the curve $\alpha$ relating the generalized holonomy maps.  We might then define a map begin holonomy maps $H:L_x\rightarrow G$ to $H':L_{x'}\rightarrow G'$ on manifolds $M$ and $M'$ as an ordered triple $(\Psi, \alpha, \phi)$ where  $\Psi : M \rightarrow M'$ is a diffeomorphism, $\phi: G \rightarrow G'$ is a Lie group isomorphism, and $\alpha: [0,1] \to M$ is a piece-wise smooth curve satisfying $\alpha(0) = \Psi^{-1}(x')$ and $\alpha(1) = x$, and $\gamma \in L_x$, $\phi \circ H(\gamma) = H' (\Psi\circ(\alpha^{-1} \bullet \gamma \bullet \alpha))$.

But there are several reasons why this definition would be unsatisfactory.  The first is simple: we are looking for a notion of ``holonomy isomorphism'', and maps defined as just described would not be isomorphisms, since they are generally not invertible.  The second problem is closely related. Recall that thinly equivalent curves always have the same holonomies, and hence transformations of holonomy maps by thinly equivalent curves do not induce meaningfully different holonomy transformations.  Thus one might prefer, rather than indexing holonomy isomorphisms by curves, to use thin-equivalence classes of curves.  This modification leads to what we will call \emph{holonomy isomorphism}$^*$s, which are equivalent classes of maps as just described whose curves $\alpha$ are all thinly equivalent---i.e., triples $(\Psi, [\alpha]_{\sim}, \phi)$, where $[\alpha]_{\sim}$ is an equivalence class of curves under thin equivalence (for further details, see definition \ref{iso} below, with $[\alpha]_{\sim}$ substituted for $\underline{\alpha}$ in the relevant places).

With this definition of holonomy isomorphism$^*$, we have a natural candidate for an identity map associated with any generalized holonomy map $H:L_x\rightarrow G$: namely, the holonomy isomorphism $1_{H}:=(id_M , [id_x]_{\sim} , id_G ):H\rightarrow H$.  We also can define the composition of holonomy isomorphisms $(\Psi, [\alpha]_{\sim}, \phi): H \rightarrow H'$ and $(\Psi', [\alpha']_{\sim}, \phi'): H' \rightarrow H''$, by $(\Psi', [\alpha']_{\sim}, \phi') \circ (\Psi, [\alpha]_{\sim}, \phi) := (\Psi' \circ \Psi, [\alpha \bullet (\Psi^{-1}\circ\alpha')]_{\sim}, \phi' \circ \phi)$, where $[\alpha \bullet (\Psi^{-1}\circ \alpha')]_{\sim}$ is the equivalence class of curves generated by $\alpha \bullet (\Psi^{-1}\circ\alpha')$ for \emph{any} curves $\alpha\in[\alpha]_{\sim}$ and $\alpha'\in[\alpha']_{\sim}$.  We can thus define a category $\mathbf{Hol}^*$ of holonomy maps and holonomy isomorphism$^*$s.  As we show in Prop. \ref{groupoid} below, holonomy isomorphism$^*$s are isomorphisms (and thus $\mathbf{Hol}^*$ is a groupoid), and so we have also solved the first problem mentioned above.

Arguably, however, holonomy isomorphism$^*$s still do not give us what we want.  The reason is that, given two generalized holonomy maps $H:L_x\rightarrow G$ and $H':L_y\rightarrow G$, there may exist distinct curves $\alpha,\beta:[0,1]\rightarrow M$, both satisfying $\alpha(0)=\beta(0)=y$ and $\alpha(1)=\beta(1)=x$, and both such that $H'=H_{\alpha}=H_{\beta}$.  To count these as distinct isomorphisms would be to assert that there is a substantive (or at least, salient) difference in the way $\alpha$ and $\beta$ take $H$ to $H'$.  But since the physics depends only on the assignments of group elements to closed curves, if $\alpha$ and $\beta$ both provide the same ``translation'' from the assignments made by $H$ to the assignments made by $H'$, then nothing in the physics turns on which translation one picks, and this should be reflected in how we differentiate isomorphisms---i.e., we should not make a distinction if there is no salient difference. We address this issue by saying that curves $\alpha,\beta:[0,1]\rightarrow M$ are equivalent (relative to $H$ and $H'$) if $H'=H_{\alpha}=H_{\beta}$.

The considerations just described are summed up in the following definition of holonomy isomorphism.


%
%

\begin{defn}[Holonomy isomorphism]\label{iso}
Let $H:L_x\rightarrow G$ and $H':L_{x'}\rightarrow G'$ be (generalized) holonomy maps on manifolds $M$ and $M'$.  A \emph{holonomy isomorphism} from $H$ to $H'$ is an ordered triple $(\Psi, \underline{\alpha}, \phi)$ where  $\Psi : M \rightarrow M'$ is a diffeomorphism, $\phi: G \rightarrow G'$ is a Lie group isomorphism, and $\underline{\alpha}$ is an equivalence class of piece-wise smooth curves $\alpha: [0, 1] \rightarrow M$ satisfying $\alpha(0) = \Psi^{-1}(x')$ and $\alpha(1) = x$, which are all such that for any $\gamma \in L_x$, $\phi \circ H(\gamma) = H' (\Psi\circ(\alpha^{-1} \bullet \gamma \bullet \alpha))$. In other words, the following diagram commutes:
\begin{center}\leavevmode\xymatrix{
L_x \ar[rr]^{\overline{\alpha}} \ar[dd]_{H} & & L_{\Psi^{-1}(x')} \ar[rr]^{\psi} && L_{x'} \ar[dd]^{H'} \\
\\
H[L_x] \ar[rrrr]_{\phi} & & & & H[L_{x'}]}
\end{center}
Where $\psi : L_{\Psi^{-1}(x')} \rightarrow L_{x'}$ is defined by $\gamma \mapsto \Psi \circ \gamma$ and $\overline{\alpha} : L_x \rightarrow L_{\Psi^{-1}(x')} $ is defined by $\gamma \mapsto \alpha^{-1} \bullet \gamma \bullet \alpha$ for some element $\alpha$ of the equivalence class $\underline{\alpha}$.
\end{defn}

As with holonomy$^*$s, we can immediately define notions of identity map, composition, and inverse for holonomy isomorphisms, and thus define a category $\mathbf{Hol}$ of holonomy maps and holonomy models.  As with holonomy isomorphism$^*$s, we find that holonomy isomorphisms are, indeed, isomorphisms, as shown in the following proposition.

\begin{prop}\label{groupoid} $\mathbf{Hol}$ and $\mathbf{Hol}^*$ are groupoids.
\end{prop}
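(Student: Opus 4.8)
The plan is to show, for each of the two categories, that they are well-defined categories and that every arrow has a two-sided inverse. First I would verify the category axioms for $\mathbf{Hol}$ (and mutatis mutandis for $\mathbf{Hol}^*$): that the proposed identity $1_H = (id_M, \underline{id_x}, id_G)$ really is a holonomy isomorphism $H \to H$ (immediate, since $id_x^{-1} \bullet \gamma \bullet id_x \sim \gamma$ by thin equivalence), that the composite $(\Psi', \underline{\alpha'}, \phi') \circ (\Psi, \underline{\alpha}, \phi) := (\Psi' \circ \Psi, \underline{\alpha \bullet (\Psi^{-1}\circ\alpha')}, \phi' \circ \phi)$ is again a holonomy isomorphism $H \to H''$, and that composition is associative and unital. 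Checking that the composite satisfies the defining commuting square is a diagram-chase: one expands $\overline{\alpha \bullet (\Psi^{-1}\circ\alpha')}$ using functoriality of $\gamma \mapsto \Psi \circ \gamma$ under $\bullet$ and the fact that $(\beta \bullet \eta)^{-1} \sim \eta^{-1} \bullet \beta^{-1}$, then reads off that the two conjugation-and-pushforward operations compose correctly and that $\phi' \circ \phi$ closes the bottom of the square. One must also check this is well-defined on equivalence classes of curves, which is exactly the content built into the definitions of $\underline{\alpha}$ and $[\alpha]_\sim$.

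Next, and this is the crux, I would exhibit an inverse for an arbitrary holonomy isomorphism $(\Psi, \underline{\alpha}, \phi): H \to H'$. The natural candidate is $(\Psi^{-1}, \underline{\Psi \circ \alpha^{-1}}, \phi^{-1}): H' \to H$ — that is, reverse the diffeomorphism, reverse the group isomorphism, and take the curve $\Psi \circ \alpha^{-1}$ from $\Psi(\alpha(1)) = x'$ to... here one has to be careful about which endpoint conventions apply and about the base-point bookkeeping, since the curve in a holonomy isomorphism $H' \to H$ must run from $(\Psi^{-1})^{-1}(x) = \Psi(x)$... wait, $x \in M$, so we need a curve in $M'$ from $\Psi(x)$ to $x'$, which is $\Psi \circ \alpha^{-1}$ read appropriately. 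I would then check that this triple genuinely is a holonomy isomorphism (i.e.\ that its defining square commutes, using $\phi^{-1}$ and the relation $H = H' \circ(\text{conjugation by }\Psi\circ\alpha)$ inverted), and that composing it with $(\Psi, \underline{\alpha}, \phi)$ in both orders yields the identity. The latter reduces to showing $\alpha \bullet (\Psi^{-1} \circ \Psi \circ \alpha^{-1}) = \alpha \bullet \alpha^{-1} \sim id_x$, which holds by thin equivalence, and similarly $(\Psi \circ \alpha^{-1}) \bullet (\Psi \circ \alpha) = \Psi \circ (\alpha^{-1} \bullet \alpha) \sim id_{x'}$, together with $\phi^{-1}\circ\phi = id_G$, $\phi\circ\phi^{-1} = id_{G'}$, $\Psi^{-1}\circ\Psi = id_M$, $\Psi\circ\Psi^{-1}=id_{M'}$.

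For $\mathbf{Hol}^*$ the argument is identical except that one works with thin-equivalence classes $[\alpha]_\sim$ rather than the coarser equivalence classes $\underline{\alpha}$; since $\alpha \bullet \alpha^{-1} \sim id_x$ already holds at the level of thin equivalence, the same computations go through verbatim, and in fact the inverse for $\mathbf{Hol}$ is obtained from the inverse for $\mathbf{Hol}^*$ by passing to the coarser quotient. The main obstacle I anticipate is purely bookkeeping: keeping the base points, the direction of $\alpha$, and the placement of $\Psi^{-1}$ versus $\Psi$ straight through the composition formula, and confirming that the coarse equivalence relation on curves (``$\alpha \sim_{H,H'} \beta$ iff $H_\alpha = H_\beta$'') is actually compatible with composition — i.e.\ that if $\alpha \sim_{H,H'} \beta$ and $\alpha' \sim_{H',H''} \beta'$ then $\alpha \bullet (\Psi^{-1}\circ\alpha') \sim_{H,H''} \beta \bullet (\Psi^{-1}\circ\beta')$ — so that composition descends to the quotient. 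That compatibility follows because $H''_{\alpha\bullet(\Psi^{-1}\circ\alpha')}$ depends on $\alpha,\alpha'$ only through $H_\alpha$ and $H'_{\alpha'}$, but it should be spelled out. No genuinely hard analysis or geometry is involved; it is a verification that the structure assembled in Section~\ref{sec:isomorphism} does what it was designed to do.
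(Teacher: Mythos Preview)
Your proposal is correct and follows essentially the same approach as the paper: verify the category axioms (the paper treats identities and closure under composition as ``clear from the foregoing'' and associativity as trivial), then exhibit the inverse $(\Psi^{-1}, \underline{(\Psi\circ\alpha)^{-1}}, \phi^{-1})$ and reduce both composites to the identity via $\alpha\bullet\alpha^{-1}\sim id$. Your write-up is in fact more careful than the paper's on two points the paper leaves implicit---that the composite really satisfies the defining commuting square, and that composition is well-defined on the coarse equivalence classes $\underline{\alpha}$---so once you clean up the endpoint bookkeeping you were hesitating over (your candidate inverse curve is correct, since $(\Psi\circ\alpha)^{-1}=\Psi\circ\alpha^{-1}$), you will have a complete argument.
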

\begin{proof} The arguments in both cases are identical, and so we will consider just $\mathbf{Hol}$.  It is clear from the forgoing that (a) we have identity arrows for each object and (b) the composition of any two holonomy isomorphisms with appropriate domain and codomains yields a new holonomy isomorphism, so it only remains to show that this composition is associative and that every holonomy isomorphism has an inverse.  Associativity is a trivial consequence of the associativity of composition of the maps determining a holonomy isomorphism.  To see that every arrow has an inverse, consider a holonomy isomorphism $(\Psi, \underline{\alpha}, \phi):H\rightarrow H'$, where $H:L_x\rightarrow G$ and $H':L_y\rightarrow G'$. Then $(\Psi^{-1}, \underline{(\Psi \circ \alpha)^{-1}}, \phi^{-1}): H' \rightarrow H$ is a holonomy isomorphism such that $(\Psi, \underline{\alpha}, \phi)\circ (\Psi^{-1}, \underline{(\Psi \circ \alpha)^{-1}}, \phi^{-1})=(\Psi\circ\Psi^{-1},\underline{(\Psi\circ\alpha)^{-1}\bullet(\Psi\circ\alpha)},\phi\circ\phi^{-1})=1_{H'}$ and $(\Psi^{-1}, \underline{(\Psi \circ \alpha)^{-1}}, \phi^{-1})\circ (\Psi, \underline{\alpha}, \phi) = (\Psi^{-1}\circ\Psi,\underline{\alpha\bullet(\Psi^{-1}\circ(\Psi\circ\alpha)^{-1})},\phi^{-1}\circ\phi)= (\Psi^{-1}\circ\Psi,\underline{\alpha\bullet\alpha^{-1}},\phi^{-1}\circ\phi) =1_{H}$, where in both cases the final equalities follow from the fact that for any curve $\alpha$, $\alpha\bullet\alpha^{-1}\in\underline{id_{\alpha(1)}}$, because all holonomy maps agree on thinly equivalent curves.
\end{proof}

There is a natural relationship between $\Hol$ and $\Hol^*$ given by a quotient functors $Q$:
$$
\Hol^* \overset{Q} \longrightarrow \Hol,
$$
where $Q$ acts as the identity on objects, and
$$
Q : (\Psi, \inbrac{\alpha}_{\sim}, \phi) \mapsto (\Psi, \underline{\alpha}, \phi).
$$
Note that $Q$ is indeed well defined, since for all $\alpha, \beta \in \underline{\alpha}$, $\alpha \sim \beta$.

The functor $Q$ clearly preserves empirical content, insofar as that is contained in the information provided by holonomy maps.  Since it is a quotient functor, it is surjective and full.  One can easily confirm, however, that $Q$ is not faithful, by considering a trivial holonomy map $H:L_x\rightarrow G$ mapping all curves $\gamma\in\L_x$ to $id_G$.  Then any closed curves $\alpha,\alpha'\in L_x$ that are not thinly equivalent will yield distinct arrows from $H$ to itself in $\mathbf{Hol}^*$, but these will be mapped to the same arrow in $\mathbf{Hol}$ by $Q$.

The functor $Q$ captures a ``natural'' relationship between $\mathbf{Hol}$ and $\mathbf{Hol}^*$.  Still, one might ask whether there are other relationships of interest between these categories.  In particular, if we are attentive to how we define $\mathbf{Hol}$ and $\mathbf{Hol}^*$ (i.e., we only consider manifolds within some fixed universe of sets), then $Q$ is an epi in $\mathbf{Cat}$, the category of small categories.  One might then wonder if $Q$ splits, i.e., if there is a functor $K:\mathbf{Hol}\rightarrow\mathbf{Hol^{*}}$ such that $Q\circ K =1_{\mathbf{Hol}}$.  If such a functor \emph{did} exist, then it would preserve empirical content, it would be bijective on objects, and it would be faithful, but it would not be full.  If such a functor existed, it would capture another ``natural'' relationship between these categories.  However, no such functor exists.

\begin{prop}\label{split}
$Q$ doesn't split.
\end{prop}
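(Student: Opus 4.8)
The plan is to assume, for contradiction, that there is a functor $K:\mathbf{Hol}\to\mathbf{Hol}^*$ with $Q\circ K=1_{\mathbf{Hol}}$, and to detect the failure of functoriality already on one very simple object. Since $Q$ is the identity on objects and on the diffeomorphism- and Lie-group-components of arrows, any such $K$ must fix objects and those two components, and can only ``refine'' the curve datum: $K(\Psi,\underline{\alpha},\phi)=(\Psi,[\beta]_\sim,\phi)$ for some representative $\beta\in\underline{\alpha}$. So a splitting would amount to a rule, functorial in the arrow, for choosing a thin-homotopy class inside each $\mathbf{Hol}$-equivalence class of connecting curves, and the goal is to show no such rule can be functorial.

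First I would specialize to $M=\mathbb{R}^2$, base point $x=0$, trivial structure group $G=\{e\}$, and the unique holonomy map $H:L_0\to\{e\}$. Because $H$ is trivial, any two piecewise-smooth curves with the same endpoints lie in the same $\mathbf{Hol}$-equivalence class, so for each $\Psi\in\mathrm{Diff}(\mathbb{R}^2)$ there is exactly one arrow $a_\Psi=(\Psi,\underline{\alpha_\Psi},\mathrm{id}):H\to H$ in $\mathbf{Hol}$, where $\underline{\alpha_\Psi}$ consists of all curves from $\Psi^{-1}(0)$ to $0$; by the composition formula, $\Psi\mapsto a_\Psi$ is a group homomorphism into $\mathrm{Aut}_{\mathbf{Hol}}(H)$ (by Prop.~\ref{groupoid} these really are automorphisms). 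Take the unit translations $T_1(p)=p+e_1$ and $T_2(p)=p+e_2$; these commute, so $a_{T_1}\circ a_{T_2}=a_{T_2}\circ a_{T_1}$ in $\mathbf{Hol}$. Writing $K(a_{T_i})=(T_i,[c_i]_\sim,\mathrm{id})$ for some curve $c_i$ from $T_i^{-1}(0)$ to $0$ (so $c_1$ runs from $(-1,0)$ to $(0,0)$ and $c_2$ from $(0,-1)$ to $(0,0)$), applying $K$ to both sides of $a_{T_1}\circ a_{T_2}=a_{T_2}\circ a_{T_1}$ and unwinding the $\mathbf{Hol}^*$-composition formula (using $T_1T_2=T_2T_1$) yields the identity of thin-homotopy classes
\[
c_2\bullet(T_2^{-1}\circ c_1)\ \sim\ c_1\bullet(T_1^{-1}\circ c_2)
\]
of curves from $(-1,-1)$ to $(0,0)$.

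The contradiction then comes from testing this relation against the $1$-form $\omega=x\,dy$ on $\mathbb{R}^2$, using that $\gamma\mapsto\int_\gamma\omega$ is a thin-homotopy invariant. Since $T_2$ preserves $\omega$ while $(T_1^{-1})^*\omega=\omega-dy$, integrating the displayed relation gives
\[
\int_{c_1}\omega+\int_{c_2}\omega=\int_{c_1}\omega+\int_{c_2}\omega-\int_{c_2}dy ,
\]
so $\int_{c_2}dy=0$; but $\int_{c_2}dy$ is the net change in the second coordinate along $c_2$, which is $0-(-1)=1$. This is absurd, so no such $K$ exists.

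The only genuinely delicate ingredient is thin-homotopy invariance of $\int_\gamma\omega$ for a \emph{non-closed} form $\omega$, and this is the step I would expect to need the most care. One clean route is to observe that $\int_\gamma\omega$ is the parallel transport, in the standard trivialization, of the principal $\mathbb{R}$-connection on $\mathbb{R}^2\times\mathbb{R}$ with connection form $\omega$; reducing paths to loops ($\gamma_1\sim\gamma_2$ makes $\gamma_1^{-1}\bullet\gamma_2$ a thinly trivial loop), the invariance follows from property~(1) of holonomy maps applied to the associated holonomy map. Alternatively one argues directly by applying Stokes' theorem to a thin null-homotopy of $\gamma_1^{-1}\bullet\gamma_2$, whose image has Lebesgue measure zero, so that the integral of $d\omega$ over it vanishes; here one needs only mild care about the regularity of the homotopy (or one uses that thin equivalence of piecewise-smooth curves is generated by reparametrization and cancellation of spurs $\eta\bullet\eta^{-1}$, under both of which $\int_\gamma\omega$ is manifestly unchanged). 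Everything else — the description of $\mathrm{Aut}_{\mathbf{Hol}}(H)$ for the trivial $H$, the commutativity of $a_{T_1}$ and $a_{T_2}$, and the endpoint bookkeeping in the composition formulas — is routine.
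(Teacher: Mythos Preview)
Your argument is correct and takes a genuinely different route from the paper's. Both proofs work with a trivial holonomy map, but the paper attempts something more elementary: it fixes curves $\alpha,\alpha'$ with $\alpha(1)=\alpha'(0)=x$, $\alpha(0)=\alpha'(1)$, and $\alpha\not\sim\alpha'^{-1}$, notes that in $\mathbf{Hol}$ the composite $(id_M,\underline{\alpha'},id_G)\circ(id_M,\underline{\alpha},id_G)$ is $1_H$, and then asserts that functoriality of $K$ forces $[\alpha\bullet\alpha']_\sim=[id_x]_\sim$, contradicting the choice of $\alpha,\alpha'$. You instead exploit a \emph{commutation} relation: the automorphisms $a_{T_1},a_{T_2}$ associated to two commuting translations must have commuting images under $K$, and you detect the failure by integrating the non-closed form $x\,dy$ along the two composite paths.

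What your approach buys is that it actually closes the argument. In the paper's setup, because $H$ is trivial the arrows $(id_M,\underline{\alpha},id_G)$ and $(id_M,\underline{\alpha'},id_G)$ depend only on the endpoints of $\alpha,\alpha'$; hence $K$ returns specific thin classes $[\beta]_\sim,[\beta']_\sim$ independent of the representatives $\alpha,\alpha'$, and the relation one actually obtains is only $\beta\bullet\beta'\sim id_x$, i.e.\ $\beta'\sim\beta^{-1}$. Nothing prevents $K$ from choosing $[\beta']_\sim=[\beta^{-1}]_\sim$ (indeed, functoriality forces it, since $(id_M,\underline{\alpha'},id_G)$ is the inverse of $(id_M,\underline{\alpha},id_G)$ in $\mathbf{Hol}$), so the paper's step ``for $\beta=\alpha$ and $\beta'=\alpha'$'' is not justified. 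Your commutator obstruction is immune to this difficulty: whatever thin classes $[c_1]_\sim,[c_2]_\sim$ a putative splitting selects, the line-integral computation forces $\int_{c_2}dy=0$, which contradicts the endpoints of $c_2$. The price is the additional analytic ingredient (thin-homotopy invariance of $\gamma\mapsto\int_\gamma\omega$ for non-closed $\omega$), but both of your suggested justifications --- via the holonomy map of the principal $\mathbb{R}$-connection with connection form $\omega$, or via the observation that reparametrization and spur-cancellation leave line integrals unchanged --- are sound.
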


\begin{proof}
If $Q$ split, there would be a functor $K: \mathbf{Hol} \rightarrow \mathbf{Hol}^{*}$ s.t. $Q \circ K = 1_{\mathbf{Hol}}$.  Consider a holonomy map $H$ associated to principal bundle with a flat connection, i.e.,  $H: L_x \rightarrow G$  is such that $H \equiv id_G$. Let $\alpha, \alpha' : [0, 1] \rightarrow M$ be s.t. $\alpha(1) = \alpha'(0) = x$, $\alpha(0) = \alpha'(1)$, and $\alpha$ and the reverse orientation $\alpha'^{-1}$ of $\alpha'$ are not thinly equivalent. Then
$$(id_M, \underline{\alpha'}, id_G ) \circ (id_M, \underline{\alpha}, id_G ) = (id_M, \underline{\alpha \bullet\alpha'}, id_G ) = (id_M, \underline{id_x}, id_G ).$$
 Since this is the identity on $H$ in $\mathbf{Hol}$, $K$ must map it to the identity on $K(H)$ in $\mathbf{Hol}^{*}$, i.e.
 $$K((id_M, \underline{\alpha'}, id_G ) \circ (id_M, \underline{\alpha}, id_G ) ) = (id_M, \inbrac{id_x}_\sim, id_G ).$$
  However, in order to be a functor, $K$ must also satisfy:
\begin{align*}
K((id_M, \underline{\alpha'}, id_G ) \circ (id_M, \underline{\alpha}, id_G ) ) &= K(id_M, \underline{\alpha'}, id_G ) \circ K(id_M, \underline{\alpha}, id_G )\\
&= (id_M, \inbrac{\beta'}_\sim, id_G) \circ (id_M, \inbrac{\beta}_\sim, id_G) \\
&= (id_M, \inbrac{\beta \bullet \beta'}_\sim, id_G)
\end{align*}

Where $\beta \in \underline{\alpha}$ and $\beta' \in \underline{\alpha'}$. These two equations imply that for $\beta = \alpha$ and $\beta' = \alpha'$,
$$
(id_M, \inbrac{id_x}_\sim, id_G ) = (id_M, \inbrac{\alpha \bullet \alpha'}_\sim, id_G),
$$
which in turn implies that $\inbrac{id_x}_\sim = \inbrac{\alpha \bullet \alpha'}_\sim$, which contradicts the assumption that $\alpha$ and $\alpha'^{-1}$ are not thinly equivalent.
\end{proof}

From this we conclude that the only physically interesting relationship between $\mathbf{Hol}$ and $\mathbf{Hol}^*$ is given by $Q$.

\section{Proof of Theorem \ref{thm:PBiso}}

Our proof of Theorem \ref{thm:PBiso}  will depend on the following three lemmas.  In what follows $T_{\Gamma, \gamma}(u)$ denotes the parallel transport via a connection $\Gamma$ on a principal bundle $P$ of a point $u$ along a curve $\gamma: [0, 1] \rightarrow M$ which is such that $\gamma(0) = \pi(u)$. In other words, $T_{\Gamma, \gamma}(u) = \hat{\gamma}_u (1)$.

\begin{lem}\label{lemma1} Let $G \rightarrow P \overset{\pi}{\rightarrow} M$ be a principal bundle and let $\Gamma$ be a principal connection on it. Then for all $x \in M$, $u \in \pi^{-1}[x]$, $\gamma \in L_x$, $g \in G$, and all piece-wise smooth curves $\alpha, \alpha': [0, 1] \rightarrow M$ such that $\alpha(0) = \alpha'(0) = x$ and $\alpha(1) = \alpha'(1)$, the following hold:
\begin{enumerate}
\item[(a)] $T_{\Gamma, \alpha^{-1} \bullet \alpha'}(u) = T_{\Gamma, \alpha^{-1}}(T_{\Gamma, \alpha'}(u))$, where $\alpha^{-1}$ is the reverse orientation of $\alpha$.
\item[(b)] $ T_{\Gamma, \alpha^{-1}}(T_{\Gamma, \alpha'}(u)) = u$ iff $T_{\Gamma, \alpha}(u) = T_{\Gamma, \alpha'}(u)$
\item[(c)] $H_{\Gamma, u}(\gamma) = e_G$, the identity element of $G$, iff $T_{\Gamma, \gamma}(u) = u$
\item[(d)] $T_{\Gamma, \alpha}(ug)  = T_{\Gamma, \alpha}(u)g$
\end{enumerate}
\end{lem}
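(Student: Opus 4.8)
The plan is to prove each of the four identities in Lemma \ref{lemma1} directly from the definition of parallel transport via the horizontal lift of curves, exploiting three basic facts about horizontal lifts: (i) the horizontal lift $\hat{\gamma}_u$ of a curve $\gamma$ starting at $u$ exists, is unique, and is reparameterization-covariant and reversal-covariant (so that the lift of $\gamma^{-1}$ through $\hat\gamma_u(1)$ retraces $\hat\gamma_u$ backwards); (ii) horizontal lifts concatenate, i.e. the lift of $\gamma_2\bullet\gamma_1$ through $u$ is the concatenation of the lift of $\gamma_1$ through $u$ with the lift of $\gamma_2$ through the endpoint of that first lift; and (iii) $G$-equivariance of the horizontal distribution: $R_g$ carries horizontal vectors to horizontal vectors, so the lift of $\gamma$ through $ug$ is $R_g$ applied to the lift through $u$.

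For part (a), I would apply the concatenation property (ii) with $\gamma_1 = \alpha'$ and $\gamma_2 = \alpha^{-1}$: the hypotheses $\alpha(0)=\alpha'(0)=x=\pi(u)$ and $\alpha(1)=\alpha'(1)$ guarantee the endpoints match up so that $\alpha^{-1}\bullet\alpha'$ is defined and that $\alpha^{-1}$ starts at $\pi(T_{\Gamma,\alpha'}(u)) = \alpha'(1) = \alpha(1)$. Unwinding $T_{\Gamma,\cdot}(\cdot) = \hat\cdot_\cdot(1)$ then gives the claim. For part (c), I would use that $H_{\Gamma,u}(\gamma)$ is by definition the unique group element with $\hat\gamma_u(1) = u\cdot H_{\Gamma,u}(\gamma)$ — equivalently $T_{\Gamma,\gamma}(u) = u\cdot H_{\Gamma,u}(\gamma)$ — so $H_{\Gamma,u}(\gamma)=e_G$ iff $T_{\Gamma,\gamma}(u)=u$, since $G$ acts freely on fibers. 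Part (d) is exactly equivariance (iii), specialized to evaluating the lift at $t=1$.

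Part (b) is where a little more care is needed, and I expect it to be the main (if still modest) obstacle. Writing $v := T_{\Gamma,\alpha'}(u)$, which lies in the fiber over $\alpha(1)=\alpha'(1)$, the claim is $T_{\Gamma,\alpha^{-1}}(v) = u$ iff $T_{\Gamma,\alpha}(u) = v$. The key sub-fact is reversal-covariance of lifts: the horizontal lift of $\alpha^{-1}$ through $v$ is the path $t\mapsto \hat{\tilde\alpha}(1-t)$ where $\tilde\alpha$ is the lift of $\alpha$ ending at $v$ — provided such a lift of $\alpha$ ending at $v$ exists, which it does precisely when one traces backward from $v$ (backward lifts always exist and are unique by the same ODE argument as forward lifts). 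So $T_{\Gamma,\alpha^{-1}}(v)$ is the starting point of the unique horizontal lift of $\alpha$ that ends at $v$. Thus $T_{\Gamma,\alpha^{-1}}(v) = u$ holds iff the horizontal lift of $\alpha$ starting at $u$ ends at $v$, i.e. iff $T_{\Gamma,\alpha}(u) = v = T_{\Gamma,\alpha'}(u)$, which is the desired equivalence. Uniqueness of horizontal lifts (in both time directions) is doing all the work here; alternatively, one can package (b) as a corollary of (a) together with (c)-style reasoning and the observation that $\alpha^{-1}\bullet\alpha \sim id_x$, applying the connection's behavior on thinly equivalent loops — but the direct reversal argument is cleaner and avoids invoking thin equivalence at this stage.
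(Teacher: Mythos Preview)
Your proposal is correct and follows essentially the same approach as the paper, which simply notes that (a) and (b) follow from the existence and uniqueness of horizontal lifts, (c) from the definition of the holonomy map, and (d) from equivariance of the connection under the right $G$-action. Your write-up is a more explicit unpacking of exactly these facts (concatenation and reversal-covariance of lifts being the relevant consequences of uniqueness), so there is no substantive difference in strategy.
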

\begin{proof}
(a) and (b) follow from the fact that every curve $\alpha$ has a unique horizontal lift $\hat{\alpha}_u$ which is such that $\hat{\alpha}_u(0) = u$. (c) follows from the definition of holonomy map. (d) follows from the equivariance of the connection under the right action of $G$ on $P$.
\end{proof}

\begin{lem}\label{lemma2} Let $G \rightarrow P \overset{\pi}{\rightarrow} M$ be a principal bundle and let $\Gamma$ a principal connection on it. Let $\alpha: [0, 1] \rightarrow M$ be a piece-wise smooth curve such that $\alpha(0) = x$ and $\alpha(1) = x'$. Then for all $u \in \pi^{-1}[x']$ and all $\gamma \in L_{x'}$, if $v = T_{\Gamma, \alpha^{-1}}(u) \in \pi^{-1}[x]$, then $$H_{\Gamma, u}(\gamma) = H_{\Gamma, v}(\alpha^{-1} \bullet \gamma \bullet \alpha) $$
\end{lem}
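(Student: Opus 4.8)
The plan is to compute the parallel transport of $v$ once around the loop $\alpha^{-1}\bullet\gamma\bullet\alpha$ in three successive stages and then read off its holonomy by comparing with the defining equation $T_{\Gamma,\eta}(w)=w\cdot H_{\Gamma,w}(\eta)$ for the holonomy map based at a point $w$ lying over $\eta(0)$. The only ingredients are the facts collected in Lemma~\ref{lemma1}: that a concatenation of curves is parallel-transported by composing the parallel transports along its pieces (the general form of part (a), with the same proof via uniqueness of horizontal lifts), that $T_{\Gamma,\alpha}$ and $T_{\Gamma,\alpha^{-1}}$ are mutually inverse (part (b)), the definition of the holonomy map (part (c)), and the equivariance of parallel transport under the right $G$-action (part (d)). Freeness of that action is used once, at the end.

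First I would record that $T_{\Gamma,\alpha}(v)=u$: this follows from the hypothesis $v=T_{\Gamma,\alpha^{-1}}(u)$ by uniqueness of horizontal lifts, since the reverse of the horizontal lift of $\alpha^{-1}$ through $u$ is the horizontal lift of $\alpha$ through $v$, and it ends at $u$ (equivalently, apply Lemma~\ref{lemma1}(b) with both curves taken to be $\alpha^{-1}$, so that its right-hand condition holds trivially). Next, decomposing the loop as $\alpha^{-1}\bullet(\gamma\bullet\alpha)$ and using the concatenation rule,
\[
T_{\Gamma,\,\alpha^{-1}\bullet\gamma\bullet\alpha}(v)\;=\;T_{\Gamma,\alpha^{-1}}\!\bigl(T_{\Gamma,\gamma}\bigl(T_{\Gamma,\alpha}(v)\bigr)\bigr)\;=\;T_{\Gamma,\alpha^{-1}}\!\bigl(T_{\Gamma,\gamma}(u)\bigr).
\]
By the definition of the holonomy map based at $u$, $T_{\Gamma,\gamma}(u)=u\cdot H_{\Gamma,u}(\gamma)$; substituting and then moving the group element out through $T_{\Gamma,\alpha^{-1}}$ by equivariance (Lemma~\ref{lemma1}(d)), together with $T_{\Gamma,\alpha^{-1}}(u)=v$, gives $T_{\Gamma,\,\alpha^{-1}\bullet\gamma\bullet\alpha}(v)=v\cdot H_{\Gamma,u}(\gamma)$.

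On the other hand, $\alpha^{-1}\bullet\gamma\bullet\alpha$ is a loop based at $x$ and $v\in\pi^{-1}[x]$, so the definition of the holonomy map based at $v$ gives $T_{\Gamma,\,\alpha^{-1}\bullet\gamma\bullet\alpha}(v)=v\cdot H_{\Gamma,v}(\alpha^{-1}\bullet\gamma\bullet\alpha)$. Equating the two expressions for this point and cancelling $v$ on the left --- legitimate because the right action of $G$ on $P$ is free --- yields $H_{\Gamma,u}(\gamma)=H_{\Gamma,v}(\alpha^{-1}\bullet\gamma\bullet\alpha)$, as claimed. I do not anticipate a real obstacle here; the only points requiring care are the orientation bookkeeping behind $T_{\Gamma,\alpha}(T_{\Gamma,\alpha^{-1}}(u))=u$ and the routine observation that the two-curve concatenation rule iterates across the three-fold composite (parallel transport being insensitive to the reparameterizations hidden in $\bullet$).
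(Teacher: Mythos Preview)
Your proof is correct and follows essentially the same approach as the paper: both arguments compute $T_{\Gamma,\alpha^{-1}\bullet\gamma\bullet\alpha}(v)$ by splitting the concatenation (Lemma~\ref{lemma1}(a)), invoking $T_{\Gamma,\alpha}(v)=u$ via Lemma~\ref{lemma1}(b), using equivariance (Lemma~\ref{lemma1}(d)), and then reading off the holonomy. The paper's version runs the chain of equalities in the opposite direction, starting from $vg$ with $g:=H_{\Gamma,u}(\gamma)$, and leaves the appeal to freeness of the $G$-action implicit, but the content is the same.
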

\begin{proof}
Suppose $H_{\Gamma, u}(\gamma) = g \in G$, i.e. that $T_{\Gamma, \gamma}(u) = ug$. Then by Lemma~\ref{lemma1} (b) and (d), $vg = T_{\Gamma, \alpha^{-1}}(u)g = T_{\Gamma, \alpha^{-1}}(ug) = T_{\Gamma, \alpha^{-1}}T_{\Gamma, \gamma}(u) =  T_{\Gamma, \alpha^{-1}}T_{\Gamma, \gamma}T_{\Gamma, \alpha}(v) = T_{\Gamma, \alpha^{-1} \bullet \gamma \bullet \alpha}(v)$. Therefore $H_{\Gamma, v}(\alpha^{-1} \bullet \gamma \bullet \alpha) = g$
\end{proof}

In the following lemma, we make use of the \emph{holonomy sub-bundle} $\Phi_{\Gamma, u} \rightarrow P_{\Gamma, u} \overset{\tilde{\pi}}{\rightarrow} M$ associated with a point $u\in P$ and principal connection $\Gamma$ on a principal bundle $G\rightarrow P\xrightarrow{\pi} M$, as discussed in detail \S II.7 of \citet{Kobayashi+Nomizu}.  This is the bundle consisting of all points of $P$ that may be joined to $u\in P$ by a horizontal curve.  The Reduction Theorem (Theorem II.7.1 of \citet{Kobayashi+Nomizu}) establishes the following about this bundle:
\begin{enumerate}
\item $\Phi_{\Gamma, u} \rightarrow P_{\Gamma, u } \overset{\tilde{\pi}}{\rightarrow} M$ is a reduced sub-bundle of  $G \rightarrow P \overset{\pi}{\rightarrow} M$ with the holonomy group $\Phi_{\Gamma, u}$ as its structure group and with $\tilde{\pi} = \pi_{\upharpoonright  P_{\Gamma, u }}$ (and similarly $P'_{\Gamma', u'}$ is a reduction of $P'$).
\item The connection $\Gamma$ is reducible to a connection $\tilde{\Gamma} = \Gamma_{\upharpoonright \tilde{\pi}}$ on $P_{\Gamma,u}$ (and similarly, $\Gamma'$ reduces to $\tilde{\Gamma}' = \Gamma'_{\upharpoonright \tilde{\pi}'}$).
\end{enumerate}
That $P_{\Gamma, u}$ is a reduced bundle of $P$ means in particular that $\Phi_{\Gamma, u}$ is a Lie subgroup of $G$ and that each element of $P$ may be written (not necessarily uniquely) as $xa$ for some $x\in P _{\Gamma, u}$ and $a\in G$.

\begin{lem} \label{lem:extend} Let $G \rightarrow P \overset{\pi}{\rightarrow} M$ and $G' \rightarrow P' \overset{\pi'}{\rightarrow} M'$ be principal bundles with principal connections $\Gamma$ and $\Gamma'$ respectively, with $M$ and $M'$ connected. Let $\Phi_{\Gamma, u} \rightarrow P_{\Gamma, u} \overset{\tilde{\pi}}{\rightarrow} M$ and $\Phi'_{\Gamma', u'} \rightarrow P'_{\Gamma', u'} \overset{\tilde{\pi'}}{\rightarrow} M'$ be the holonomy sub-bundles of $P$ and $P'$  at $u$ and $u'$, respectively, and $\tilde{\Gamma}$ and $\tilde{\Gamma}'$ be the restrictions of $\Gamma$ and $\Gamma'$ to $P_{\Gamma, u}$ and $P'_{\Gamma', u'}$, respectively. If there is a principal bundle isomorphism $(f, \Psi , \phi_{\upharpoonright \Phi_{\Gamma, u}}) : P_{\Gamma, u} \rightarrow P'_{\Gamma', u'}$ that preserves the connections $\tilde{\Gamma}$ and $\tilde{\Gamma}'$, where $\Psi: M \rightarrow M'$ is a diffeomorphism and $\phi: G \rightarrow G'$ is a Lie group isomorphism, then $(f, \Psi , \phi_{\upharpoonright \Phi_{\Gamma, u}})$ can be extended to a principal bundle isomorphism $(F, \Psi, \phi): P \rightarrow P'$ that preserves $\Gamma$ and $\Gamma'$.
\end{lem}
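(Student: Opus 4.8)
The plan is to build the extension $(F,\Psi,\phi)$ on the nose using the fact, recalled just before the lemma, that every point of $P$ can be written (non-uniquely) as $v a$ with $v \in P_{\Gamma,u}$ and $a \in G$, and likewise for $P'$. First I would define $F : P \to P'$ by setting $F(va) := f(v)\phi(a)$ for $v \in P_{\Gamma,u}$, $a \in G$. The immediate task is \emph{well-definedness}: if $v_1 a_1 = v_2 a_2$ with $v_i \in P_{\Gamma,u}$, then $v_2 = v_1 (a_1 a_2^{-1})$, so $v_1$ and $v_2$ lie in the same fibre of $P_{\Gamma,u}$ and the element $g := a_1 a_2^{-1}$ satisfies $v_1 g = v_2 \in P_{\Gamma,u}$; but the structure group of $P_{\Gamma,u}$ is the holonomy group $\Phi_{\Gamma,u}$, so $g \in \Phi_{\Gamma,u}$ and hence $\phi(g)$ is defined consistently by the given isomorphism on the subbundle. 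Then $f(v_2)\phi(a_2) = f(v_1 g)\phi(a_2) = f(v_1)\phi(g)\phi(a_2) = f(v_1)\phi(a_1 a_2^{-1})\phi(a_2) = f(v_1)\phi(a_1)$, using $\phi$-equivariance of $f$ on $P_{\Gamma,u}$. So $F$ is well defined.

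Next I would verify that $(F,\Psi,\phi)$ is a principal bundle isomorphism. Equivariance $F(va \cdot b) = F(v(ab)) = f(v)\phi(ab) = f(v)\phi(a)\phi(b) = F(va)\phi(b)$ is immediate. Covering $\Psi$: $\pi'(F(va)) = \pi'(f(v)\phi(a)) = \pi'(f(v)) = \Psi(\tilde\pi(v)) = \Psi(\pi(va))$. Smoothness of $F$ is the one point requiring a little care: locally, choose a local section $\sigma$ of $P_{\Gamma,u}$ over an open $U \subseteq M$ (these exist because $P_{\Gamma,u}$ is a smooth principal bundle), which is also a local section of $P$; then the map $U \times G \to \pi^{-1}[U]$, $(p,a)\mapsto \sigma(p)a$, is a diffeomorphism, and in this trivialization $F$ reads $(p,a) \mapsto (f\circ\sigma(p), \phi(a))$ composed with the analogous trivialization of $P'$, which is manifestly smooth since $f$ and $\phi$ are; the same argument applied to $F^{-1}$ (built from $f^{-1}$ and $\phi^{-1}$) gives smoothness of the inverse, so $F$ is a diffeomorphism. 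That $F$ restricts to $f$ on $P_{\Gamma,u}$ is clear (take $a = e_G$), and that $\phi$ restricts to $\phi_{\upharpoonright \Phi_{\Gamma,u}}$ on the holonomy group is given.

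Finally I would check that $F$ preserves the connections, i.e.\ that $F$ carries $\Gamma$-horizontal vectors to $\Gamma'$-horizontal vectors. The cleanest route is via parallel transport / horizontal lifts rather than the connection forms directly. A curve in $P$ through $va$ is horizontal for $\Gamma$ iff, writing it as $t \mapsto \hat\gamma(t) c$ for a fixed $c \in G$ and a $\Gamma$-horizontal curve $\hat\gamma$ in $P$, one can further arrange $\hat\gamma$ to lie in $P_{\Gamma,u}$ (by the Reduction Theorem, horizontal curves through a point of the holonomy subbundle stay in it, and every horizontal curve of $P$ is such a curve right-translated by a constant); then $F(\hat\gamma(t) c) = f(\hat\gamma(t))\phi(c)$, and $f$ carries $\tilde\Gamma$-horizontal curves to $\tilde\Gamma'$-horizontal curves by hypothesis, while right translation by the constant $\phi(c)$ preserves $\Gamma'$-horizontality by equivariance of $\Gamma'$. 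Hence $F$ sends $\Gamma$-horizontal curves to $\Gamma'$-horizontal curves, which is exactly connection-preservation.

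I expect the main obstacle to be the well-definedness step — specifically, pinning down that the ambiguity group relating two decompositions $v_1 a_1 = v_2 a_2$ really lies in $\Phi_{\Gamma,u}$ (so that $\phi$ applied to it agrees with the subbundle data), which is where the Reduction Theorem is doing the essential work; once that is in hand, equivariance, smoothness, and connection-preservation are all routine local computations of the kind sketched above.
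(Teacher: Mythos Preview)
Your proposal is correct and follows essentially the same approach as the paper: define $F(va):=f(v)\phi(a)$, verify well-definedness using that the ambiguity $a_1a_2^{-1}$ lies in $\Phi_{\Gamma,u}$ together with $\phi$-equivariance of $f$, check equivariance and fibre-preservation, establish smoothness via a local section of the holonomy subbundle inducing a trivialization of $P$, and verify connection-preservation by tracking horizontal curves (which can be written as right-translates of horizontal curves in $P_{\Gamma,u}$). The paper's proof differs only in presentation---it spells out bijectivity separately and gives an explicit formula for the transition map $\theta$ in the smoothness step---but the underlying argument is the same.
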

\begin{proof}
Define $F:P\rightarrow P'$ from $f$ as:
$$F(pg) := f(p)\phi(g) \text{ for } p \in  P_{\Gamma, u}, \text{ } g \in G$$
To prove that $(F, \Psi, \phi)$ is a principal bundle isomorphism, we must show that $F$ is well-defined and a diffeomorphism, and that the following identities hold:
\begin{enumerate}
\item $\pi' \circ F = \Psi \circ \pi$
\item $\pi \circ F^{-1} = \Psi^{-1} \circ \pi'$
\item For all $v \in P$, $g \in G$, $F(vg) = F(v)\phi(g)$
\end{enumerate}
Finally, we must show that $(F,\Psi,\phi)$ preserves $\Gamma$.  We do this by showing that the bundles agree, via the transformation $(F, \Psi, \phi)$, on which curves are horizontal.

To see that $F$ is well-defined, consider any $v \in P$, and suppose there are $x, y \in P_{\Gamma, u}$ and $g,h \in G$ such that $v = xg = yh$. Then $x = yhg^{-1}$, and hence
$$F(xg) = F((yhg^{-1})(g)) = f(yhg^{-1})\phi(g) = f(y)\phi(h)\phi(g^{-1})\phi(g) = f(y)\phi(h) = F(yh)$$

To show that $F$ is also a diffeomorphism, it is sufficient to show that $F$ is bijective and that it is locally a diffeomorphism.  First suppose $F(v) = F(w)$ for some $v, w \in P$. Then by the definition of $F$, $\pi(v) = \pi(w)$, so we may write $v = xg$ and $w = xh$ for the same $x \in P_{\Gamma, u}$. Thus $f(x)\phi(g) = F(v) = F(w) = f(x)\phi(h)$, but since $\phi$ is an isomorphism, this implies that $g = h$ and hence $v = xg = yh = w$.  Thus $F$ is injective.  Now consider any $v' \in P'$.  Write $v' = x'g'$ for some $x' \in P'_{\Gamma', u'}$, $g' \in G'$. Then $F(f^{-1}(x')\phi^{-1}(g')) = x'g' = v'$. Since $f$ and $\phi$ are bijections, $f^{-1}(x')\phi^{-1}(g')$ is a well-defined element of $P$.  So $F$ is bijective.

Finally, let $v \in P$, and let $U \subset M$ be a neighborhood of $\pi(v)$ which is such that a local trivialization of $\pi$ is defined on $U$ and a local trivialization of $\pi'$ is defined on $\Psi[U]$. Then there is a local section $\sigma: U \rightarrow P_{\Gamma, u}$, and $f\circ\sigma\circ\Psi^{-1}$ is a local section of $P'_{\Gamma', u'}$ on $\Psi[U]$. Then for $p \in \pi^{-1}[U]$,
$$F(p) = F(\sigma\circ\pi(p)\theta(p)) = f\circ\sigma\circ\pi(p)\phi\circ\theta(p),$$
where $\theta: \pi^{-1}[U] \rightarrow G$ as $p \mapsto a$, where $a$ is the unique element of $G$ such that $p = \sigma(\pi(p))a$. To see that $\theta$ is smooth, let $\xi: \pi^{-1}[U] \rightarrow U \times G$ be a local trivialization of $P$. Then
$$\theta(p) = ((proj_{R}\circ \xi \circ \sigma \circ \pi)(p))^{-1}(proj_{R} \circ \xi)(p)$$
where $proj_R : U \times G \rightarrow G$ acts as $(z, b) \mapsto b$.
Thus $F_{\upharpoonright \pi^{-1}[U]}$ is the product of compositions of smooth maps, and is hence smooth. The argument for its inverse follows by analogy, once one notes that $F^{-1}(x'g') = f^{-1}(x')\phi^{-1}(g')$.  This completes the argument that $F$ is a diffeomorphism.

We now confirm that the identities 1-3 above hold.  Let $v \in P$.  Then $v = xg$ for some $x \in P_{\Gamma, u}$ and $g \in G$. Since $f$ is an isomorphism and $\pi(v) = \pi(x)$,
$$\pi' \circ F(v) = \pi' (f(x)\phi(g)) = \pi'(f(x)) = \Psi(\pi(x)) = \Psi(\pi(v)). $$  So $\pi'\circ F = \Psi\circ \pi$.  An identical argument establishes that $\pi\circ F^{-1} = \Psi^{-1}\circ\pi'$. Now suppose we have some $v \in P$ and $g \in G$. Then $v = xh$ for some $x \in P_{\Gamma, u}$ and $h \in G$.  It follows that
$$F(vg) = F(xhg) = f(x)\phi(hg) = f(x)\phi(h)\phi(g) = F(v)\phi(g). $$  So $F(vg)=F(v)\phi(g)$, and thus $(F,\Psi,\phi)$ is a principal bundle isomorphism.

It remains to show that $(F,\Psi,\phi)$ preserves $\Gamma$.  Let $\gamma$ be a smooth curve in $M$, $v \in \pi^{-1}(\gamma(0))$, and suppose $v = xg$, $x \in P_{\Gamma, u}$, $g \in G$. Since $\Gamma$ is a principal connection, the lifts of $\gamma$ to $x$ and $v$ are related as
$\hat{\gamma}_v (t) = \hat{\gamma}_x (t)g$. Since $f$ takes $\tilde{\Gamma}$ to $\tilde{\Gamma}'$, we have that
$$F(\hat{\gamma}_v (t)) = F(\hat{\gamma}_x (t)g) = f(\hat{\gamma}_x (t))\phi(g) = \widehat{\Psi \circ \gamma}_{f(x)}(t)\phi(g) = \widehat{\Psi \circ \gamma}_{F(v)}(t).$$  Thus $\Gamma$ and $\Gamma'$  agree on horizontal curves.
\end{proof}

We now turn to the principal result of this section, which we restate here for convenience.
\setcounter{thm}{0}
\begin{thm} Let $G \rightarrow P \overset{\pi}{\rightarrow} M$ and $G' \rightarrow P' \overset{\pi'}{\rightarrow} M'$ be principal bundles with principal connections $\Gamma$ and $\Gamma'$ respectively, and suppose that $M$ and $M'$ are connected.  Suppose there are points $u\in P$ and $u'\in P'$ such that the induced holonomy maps based at $u$ and $u'$ are isomorphic.  Then there is a connection-preserving principal bundle isomorphism between $P$ and $P'$.
\end{thm}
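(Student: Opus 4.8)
The plan is to reduce the problem to the holonomy sub-bundles and then apply Lemma~\ref{lem:extend}. Write $x=\pi(u)$, $x'=\pi'(u')$, and let $(\Psi,\underline{\alpha},\phi)\colon H_{\Gamma,u}\to H_{\Gamma',u'}$ be the given holonomy isomorphism (since $\mathbf{Hol}$ is a groupoid it is harmless to fix the direction), with $\alpha$ a representative of $\underline{\alpha}$, a piece-wise smooth curve from $\Psi^{-1}(x')$ to $x$. Put $v:=T_{\Gamma,\alpha^{-1}}(u)\in\pi^{-1}[\Psi^{-1}(x')]$; since $v$ and $u$ are joined by the horizontal curve $\widehat{\alpha^{-1}}_u$ they lie in the same holonomy sub-bundle, so $P_{\Gamma,v}=P_{\Gamma,u}$ and $\Phi_{\Gamma,v}=\Phi_{\Gamma,u}$. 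The first step is to upgrade the defining property of the holonomy isomorphism into a clean ``change of base point'' identity. Lemma~\ref{lemma2} gives $H_{\Gamma,u}(\gamma)=H_{\Gamma,v}(\alpha^{-1}\bullet\gamma\bullet\alpha)$ for $\gamma\in L_x$, so combining with Definition~\ref{iso} we get $\phi\circ H_{\Gamma,v}(\alpha^{-1}\bullet\gamma\bullet\alpha)=H_{\Gamma',u'}(\Psi\circ(\alpha^{-1}\bullet\gamma\bullet\alpha))$; then, using that every $\eta\in L_{\Psi^{-1}(x')}$ is thinly equivalent to $\alpha^{-1}\bullet(\alpha\bullet\eta\bullet\alpha^{-1})\bullet\alpha$, that a diffeomorphism carries thinly equivalent curves to thinly equivalent curves, and that holonomy maps are constant on thin-equivalence classes, I would deduce
\[
\phi\circ H_{\Gamma,v}(\eta)=H_{\Gamma',u'}(\Psi\circ\eta)\qquad\text{for all }\eta\in L_{\Psi^{-1}(x')}.
\]
In particular $\phi$ restricts to a Lie group isomorphism $\Phi_{\Gamma,u}=\Phi_{\Gamma,v}\to\Phi'_{\Gamma',u'}$ between the holonomy groups.

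Next I would build the isomorphism $f\colon P_{\Gamma,u}\to P'_{\Gamma',u'}$ of holonomy sub-bundles. Since $M$ is connected, every $p\in P_{\Gamma,u}=P_{\Gamma,v}$ has the form $p=T_{\Gamma,\delta}(v)$ for some piece-wise smooth $\delta$ in $M$ with $\delta(0)=\Psi^{-1}(x')$; set $f(p):=T_{\Gamma',\Psi\circ\delta}(u')$. Well-definedness is exactly where the displayed identity is used: if $T_{\Gamma,\delta}(v)=T_{\Gamma,\delta'}(v)$ then Lemma~\ref{lemma1}(a)--(c) force $H_{\Gamma,v}(\delta^{-1}\bullet\delta')=e_G$, hence $H_{\Gamma',u'}(\Psi\circ(\delta^{-1}\bullet\delta'))=\phi(e_G)=e_{G'}$, and running Lemma~\ref{lemma1}(a)--(c) backwards gives $T_{\Gamma',\Psi\circ\delta}(u')=T_{\Gamma',\Psi\circ\delta'}(u')$. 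Repeating this chain with $\phi^{-1}$ in place of $\phi$ shows $f$ is injective; writing an arbitrary element of $P'_{\Gamma',u'}$ as $T_{\Gamma',\delta'}(u')$ and transporting $\Psi^{-1}\circ\delta'$ shows $f$ is onto, with inverse $p'\mapsto T_{\Gamma,\Psi^{-1}\circ\delta'}(v)$. Equivariance $f(pg)=f(p)\phi(g)$ for $g\in\Phi_{\Gamma,u}$ follows from Lemma~\ref{lemma1}(d) together with $pg=T_{\Gamma,\delta\bullet\gamma_g}(v)$ for any loop $\gamma_g$ at $\Psi^{-1}(x')$ with $H_{\Gamma,v}(\gamma_g)=g$, and $\pi'\circ f=\Psi\circ\tilde{\pi}$ is immediate. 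For smoothness of $f$ and $f^{-1}$ I would argue exactly as in the proof of Lemma~\ref{lem:extend}: near a point, use a local section of $P_{\Gamma,u}$ assembled from a smoothly varying family of short curves, note that $f$ sends it to a local section of $P'_{\Gamma',u'}$ because parallel transport depends smoothly on the transporting curve, and then use a local trivialization to express $f$ as a composition of smooth maps. Finally $(f,\Psi,\phi|_{\Phi_{\Gamma,u}})$ preserves the connections: if $c$ is a $\Gamma$-horizontal curve in $P_{\Gamma,u}$ covering $\bar{c}$ in $M$, then $f\circ c$ is the $\Gamma'$-horizontal lift of $\Psi\circ\bar{c}$ through $f(c(0))$, so $f$ carries horizontal curves to horizontal curves and, being a fibrewise bijection over $\Psi$, carries the horizontal distribution of $\tilde{\Gamma}$ onto that of $\tilde{\Gamma}'$.

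At this point $(f,\Psi,\phi|_{\Phi_{\Gamma,u}})\colon P_{\Gamma,u}\to P'_{\Gamma',u'}$ is a connection-preserving principal bundle isomorphism between the holonomy sub-bundles, so Lemma~\ref{lem:extend} extends it to a connection-preserving principal bundle isomorphism $(F,\Psi,\phi)\colon P\to P'$, which is what we wanted. (Connectedness of $M$ and $M'$ enters both in guaranteeing that the holonomy sub-bundles project onto all of $M$ and $M'$ and in the Reduction Theorem's identification of the holonomy groups as Lie-subgroup structure groups.) I expect the main obstacle to be the passage from the defining condition of a holonomy isomorphism, which is phrased only for loops of the special form $\alpha^{-1}\bullet\gamma\bullet\alpha$, to the identity $\phi\circ H_{\Gamma,v}=H_{\Gamma',u'}\circ(\Psi\circ-)$ valid on all loops at $\Psi^{-1}(x')$: this is the statement that makes the Lemma~\ref{lemma1} bookkeeping reversible, and hence makes $f$ both well defined and injective. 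Checking that $f$ is a diffeomorphism is routine but, as with Lemma~\ref{lem:extend}, somewhat lengthy.
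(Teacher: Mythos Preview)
Your proposal is correct and follows essentially the same route as the paper: define the map on the holonomy sub-bundles by transporting curves through $\Psi$, verify well-definedness and bijectivity via Lemma~\ref{lemma1} and the holonomy-isomorphism identity, check equivariance and connection preservation, and then invoke Lemma~\ref{lem:extend}. Your extraction of the identity $\phi\circ H_{\Gamma,v}(\eta)=H_{\Gamma',u'}(\Psi\circ\eta)$ for all $\eta\in L_{\Psi^{-1}(x')}$ is a clean organizational move; the paper performs the equivalent computation inline (via Lemma~\ref{lemma2}) each time it is needed. One small caution: the smoothness argument in Lemma~\ref{lem:extend} is about extending an already-smooth $f$ to $F$, so it is not quite the right reference; the paper handles smoothness of $f$ itself by choosing a Riemannian metric and using the exponential map to build the ``smoothly varying family of short curves'' you describe, then writing $f$ locally as $\eta'\circ(\Psi\times\phi)\circ\eta^{-1}$ in trivializations adapted to the resulting sections---which is exactly the idea you sketch.
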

\begin{proof}
 We first show that there is a principal bundle isomorphism $(f, \Psi , \phi) : P_{\Gamma, u} \rightarrow P'_{\Gamma', u'}$ that preserves $\tilde{\Gamma}$, where  $\Phi_{\Gamma, u} \rightarrow P_{\Gamma, u} \overset{\tilde{\pi}}{\rightarrow} M$ and $\Phi'_{\Gamma', u'} \rightarrow P'_{\Gamma', u'} \overset{\tilde{\pi'}}{\rightarrow} M'$ are the holonomy sub-bundles of $P$ and $P'$  at $u$ and $u'$, respectively, and $\tilde{\Gamma}$ and $\tilde{\Gamma}'$ are the restrictions of $\Gamma$ and $\Gamma'$ and $P_{\Gamma, u}$ and $P'_{\Gamma', u'}$, respectively.  We then invoke Lemma \ref{lem:extend} to extend $(f, \Psi, \phi)$ to a principal bundle isomorphism  $(F, \Psi, \phi): P \rightarrow P'$ that preserves $\Gamma$.

First, since $H_{\Gamma,u}$ and $H'_{\Gamma',u'}$, the holonomy maps induced by $\Gamma$ and $\Gamma'$ and based at $u$ and $u'$, respectively, are isomorphic by assumption, there must be some holonomy isomorphism $(\Psi,\underline{\alpha},\phi):H_{\Gamma,u}\rightarrow H'_{\Gamma',u'}$.  Let $z := T_{\Gamma, \alpha^{-1}}(u) \in \pi^{-1}(\alpha(0))$, where $\alpha\in\underline{\alpha}$.  (Note that $z \in P_{\Gamma, u}$, and moreover $P_{\Gamma, u} = P_{\Gamma, z}$, i.e., every element of $P_{\Gamma, u}$ can be connected to $z$ via some piece-wise smooth, horizontal curve). Define $f: P_{\Gamma, u} \rightarrow P'_{\Gamma', u'}$ as follows:
\begin{enumerate}
\item[(i)] $f(z) := u'$
\item[(ii)] For any $v \in P_{\Gamma, u}$, pick some piece-wise smooth curve $\beta_{v} \in C_{M, \pi(z)}$ (where $C_{M, \pi(z)}$ denotes the set of piece-wise smooth space-time curves $\gamma : [0, 1] \rightarrow M$ such that $\gamma(0) =\pi(z) = \alpha(0)$) such that $v = T_{\tilde{\Gamma}, \beta_{v}}(z)$, the parallel transport in $P_{\Gamma,u}$ of $z$ along $\beta_v$ according to the connection $\tilde{\Gamma}$. Then set $f(v) := T'_{\tilde{\Gamma}', \Psi \circ \beta_{v}}(u')$, the parallel transport in $\tilde{\pi}$ of $u'$ along $\Psi \circ \beta_v$.
\end{enumerate}

We claim that the triple $(f, \Psi, \phi)$ realizes the desired principal bundle isomorphism. To prove this, we must show that $f$ is well-defined, a diffeomorphism, and that the following identities hold:
\begin{enumerate}
\item $\tilde{\pi}' \circ f = \Psi \circ \tilde{\pi}$
\item $\tilde{\pi} \circ f^{-1} = \Psi^{-1} \circ \tilde{\pi}'$
\item For all $v \in P_{\Gamma, u}$, $g \in \Phi_{\Gamma, u}$, $f(vg) = f(v)\phi(g)$
\end{enumerate}
Finally, we must show that $(f, \Psi, \phi)$ preserves the reduced connection $\tilde{\Gamma}$.

We begin by showing that $f$ is well-defined.  Consider any point $v \in P_{\Gamma, u}$. Suppose the curves $\beta$ and $\beta' \in C_{M, \pi(z)}$ are such that $T_{\tilde{\Gamma}, \beta}(z) = T_{\tilde{\Gamma}, \beta'}(z) = v$. We want to show that $T'_{\tilde{\Gamma}',\Psi \circ \beta}(u') = T'_{\tilde{\Gamma}',\Psi \circ \beta'}(u')$. Let $\beta^{-1}$ denote the reverse orientation of $\beta$, and $e_{G}$ the identity element of G (and hence of $\Phi_{\Gamma, z}$ and $\Phi_{\Gamma, u}$). By Lemma~\ref{lemma1} (a) and (b), $T_{\tilde{\Gamma}, \beta^{-1} \bullet \beta'}(z) = T_{\tilde{\Gamma},  \beta^{-1}}(T_{\tilde{\Gamma},  \beta'}(z)) = T_{\tilde{\Gamma}, \beta^{-1}}(v) = z$. Thus by Lemma~\ref{lemma1} (c), $H_{\Gamma, z}(\beta^{-1} \bullet \beta') = e_{G}$. Since $\phi$ is a Lie group isomorphism, we  also know that $\phi(e_{G} ) = e_{G'}$. By Lemma~\ref{lemma2}, then, we know that  $e_{G} = H_{\Gamma, z}(\beta^{-1} \bullet \beta') = H_{\Gamma, u}(\alpha \bullet \beta^{-1} \bullet \beta' \bullet \alpha^{-1}) =  H_{\Gamma, u}(\bar{\alpha}^{-1}(\beta^{-1} \bullet \beta'))$, where $\bar{\alpha}$ is as in Def. \ref{iso}. Since $(\Psi, \underline{\alpha}, \phi)$ is a holonomy isomorphism, we know that $e_{G'} = \phi\circ H_{\Gamma, u}(\bar{\alpha}^{-1}(\beta^{-1} \bullet \beta')) = (H_{\Gamma', u'} \circ \psi \circ \bar{\alpha})(\bar{\alpha}^{-1}(\beta^{-1} \bullet \beta')) = H_{\Gamma', u'} (\Psi \circ (\beta^{-1} \bullet \beta'))$. This tells us that $u' = T'_{\tilde{\Gamma}',\Psi \circ (\beta^{-1} \bullet \beta')}(u') = T'_{\tilde{\Gamma}',\Psi \circ \beta^{-1}}(T'_{\tilde{\Gamma}',\Psi \circ \beta'}(u'))$. By Lemma~\ref{lemma1} (b), this implies that $T'_{\tilde{\Gamma}',\Psi \circ \beta}(u') = T'_{\tilde{\Gamma}',\Psi \circ \beta'}(u')$.  So $f$ is well-defined.

We now show that $f$ is bijective.  (Later we will also show that $f$ and $f'$ are smooth, completing the proof that $f$ is a diffeomorphism.)  Let $v, w \in P_{\Gamma, u}$, and suppose $f(v) = f(w)$. We want to show that $v = w$. Since $f(v) = f(w)$, we know that $T'_{\tilde{\Gamma}', \Psi \circ \beta_v}(u') = f(v) = f(w) = T'_{\tilde{\Gamma}', \Psi \circ \beta_w}(u')$.  By Lemma~\ref{lemma1} (a) and (b) and the fact that $\Psi$ is a diffeomorphism, we get that $u' =  T'_{\tilde{\Gamma}', (\Psi \circ \beta_{v})^{-1}}(T'_{\tilde{\Gamma}', \Psi \circ \beta_w }(u'))  = T'_{\tilde{\Gamma}', (\Psi \circ \beta_{v})^{-1} \bullet (\Psi \circ \beta_w )}(u') = T'_{\tilde{\Gamma}', \Psi \circ (\beta^{-1}_{v} \bullet \beta_w )}(u') $. Thus by Lemma~\ref{lemma1} (c) we get that $H_{\Gamma', u'}(\Psi \circ (\beta^{-1}_{v} \bullet \beta_w )) = e_{G'}$. Since $(\Psi, \underline{\alpha}, \phi)$ is a holonomy isomorphism, this implies that $\phi(H_{\Gamma, u}(\bar{\alpha}^{-1}(\beta^{-1}_{v} \bullet \beta_w ))) = e_{G'}$, which, since $\phi$ is a Lie group isomorphism, implies that $H_{\Gamma, u}(\bar{\alpha}^{-1}(\beta^{-1}_{v} \bullet \beta_w )) = e_G$.  By Lemma~\ref{lemma2}, then, $H_{\Gamma, z}(\beta^{-1}_{v} \bullet \beta_w ) = e_G$. Thus by Lemma~\ref{lemma1} (c), $v = T_{\tilde{\Gamma}, \beta_v }(z) =  T_{\tilde{\Gamma}, \beta_w }(z) = w$.  So $f$ is injective.  Now let $w' \in P'_{\Gamma',u'}$, and let the curve $\beta' \in C_{M', \pi'(u')}$ be such that $T'_{\tilde{\Gamma}',\beta'}(u') = w'$.  Then there is a unique $v \in P_{\Gamma, u}$ such that $v = T_{\tilde{\Gamma},\Psi^{-1} \circ \beta'}(z)$. Then $f(v) = T'_{\tilde{\Gamma'}, \Psi \circ \alpha_{v}}(u')= T'_{\tilde{\Gamma'}, \Psi\circ(\Psi^{-1} \circ \beta')}(u') = T'_{\tilde{\Gamma'}, \beta'}(u') = w'$. (The second equality follows from fact that f is well-defined.)  It follows that $f$ is bijective.

We will now establish identities 1-3.  Let $v \in P_{\Gamma, u}$.  Then $$\tilde{\pi}'(f(v)) = \tilde{\pi}'(T'_{\tilde{\Gamma}',\Psi \circ \beta_{v}}(u')) = (\Psi \circ \beta_{v})(1) = \Psi(\beta_{v}(1)) = \Psi(\tilde{\pi}(v)).$$
So $\tilde{\pi}'\circ f = \Psi\circ\tilde{\pi}$.  By identical reasoning, $\tilde{\pi}\circ f^{-1} = \Psi^{-1}\circ\tilde{\pi}'$. Finally, let $v \in P_{\Gamma, u}$ and $g \in \Phi_{\Gamma, u}$. First note that by Lemma~\ref{lemma1} (d) and the well-definedness of $f$, we can assume without loss of generality that $\beta_{vg} = \beta_v \bullet \beta_{zg}$. By Lemma~\ref{lemma1} (a), $f(vg) = T'_{\tilde{\Gamma}', \Psi \circ (\beta_v \bullet \beta_{zg})}(u') =  T'_{\tilde{\Gamma}', \Psi \circ \beta_v}(T'_{\tilde{\Gamma}', \Psi \circ  \beta_{zg}}(u'))$.  By the definition of holonomy isomorphism, $T_{\tilde{\Gamma}', \Psi \circ \beta_{zg}}(u') = u'H_{\Gamma', u'}(\Psi \circ\beta_{zg}) = u'H_{\Gamma', u'}(\Psi \circ \bar{\alpha} \circ (\alpha \bullet \beta_{zg} \bullet \alpha^{-1})) = u'\phi(H_{\Gamma, u}(\alpha \bullet \beta_{zg} \bullet \alpha^{-1})) = u'\phi(H_{\Gamma, z}(\beta_{zg})) = u'\phi(g)$. Plugging this equality into the last one, and using Lemma~\ref{lemma1} (d), we get: $f(vg) = T'_{\tilde{\Gamma}', \Psi \circ \beta_v}(u'\phi(g)) = T'_{\tilde{\Gamma}', \Psi \circ \beta_v}(u')\phi(g) = f(v)\phi(g)$.

Next we show that $f$ preserves $\Gamma$. It suffices to show that for all piece-wise smooth curves $\gamma: [0, 1] \rightarrow M$  and all $w \in \pi^{-1}(\gamma(0))$, $f(T_{\tilde{\Gamma}, \gamma}(w)) = T'_{\tilde{\Gamma}, \Psi \circ \gamma}f(w)$. But this follows easily from the definition of $f$:  $f(T_{\tilde{\Gamma}, \gamma}(w)) = T'_{\tilde{\Gamma}, \Psi \circ \beta_{T_{\tilde{\Gamma}, \gamma}(w)}}(u') = T'_{\tilde{\Gamma}, \Psi \circ (\gamma \bullet \beta_{w})}(u') = T'_{\tilde{\Gamma}, \Psi \circ \gamma}(T'_{\tilde{\Gamma}, \Psi \circ  \beta_{w}}(u')) =  T'_{\tilde{\Gamma}, \Psi \circ \gamma}(f(w))$.

To complete the proof, we have only to show that $f$ and $f^{-1}$ are smooth.  Then $f$ will be a diffeomorphism, and $(f,\Psi,\phi)$ will be a principal bundle isomorphism that preserves $\Gamma$. Let $v \in P_{\Gamma, u}$ and let $V \subseteq M$ an open neighborhood of $x = \tilde{\pi}(v)$ on which a local trivialization of $P_{\Gamma, u}$ is defined. Let $V'$ be a neighborhood of $\Psi(x)$ on which a local trivialization of $P'_{\Gamma', u'}$ is defined. Let $g$ be a metric on $M$, $g' = \Psi_{*}(g)$. Let $U$ be an open subset of $V \cap \Psi^{-1}[V']$ (containing $x$) on which the exponential map $exp_{x}$ is a diffeomorphism from a subset $U_x \subseteq T_x M$ onto $U$.

By definition, $exp_{x}(\xi) = \gamma_{\xi}(1)$, where $\gamma_{\xi}$ is a $g$-geodesic in $M$ such that $\left(\frac{d}{dt}\gamma_{\xi}\right)_{t=0} = \xi$. We may also ``lift" $exp_x$ to $v$ by defining $\widehat{exp}_v : U_x \rightarrow P_{\Gamma, u}$, where $\xi \mapsto (\hat{\gamma}_{\xi})_v (1)$. Similarly we may define $exp_{\Psi(x)}: U'_{\Psi(x)} \rightarrow P'_{\Gamma', u'}$ on $M'$ using $g'$, in which case $U'_{\Psi(x)} = \Psi_{*}[U_x]$, and for any $\xi'\in U'_{\Psi(x)}$,
$$exp_{\Psi(x)}(\xi') = \gamma_{\xi'}(1) = \Psi \circ \gamma_{\Psi^{*}(\xi')}(1) = \Psi \circ exp_{x}(\Psi^{*}(\xi'))$$
since $g' = \Psi_{*}(g)$.  (Recall that since $\Psi$ is a diffeomorphism, we may define the pullback of vectors as $\Psi^* = (\Psi^{-1})^*$.)  We also get that
\begin{align*}
\widehat{exp}_{f(v)}(\xi') &= (\hat{\gamma}_{\xi'})_{f(v)}(1) = (\widehat{\Psi \circ \gamma_{\Psi^{*}(\xi')}})_{f(v)} \\
&= T'_{\Gamma', \Psi \circ ( \gamma_{\Psi^{*}(\xi')} \bullet \beta_{v})}(u') \\
&= f(T_{\Gamma, \gamma_{\Psi^{*}(\xi')} \bullet \beta_{v}}(u)) \\
&= f \circ \widehat{exp}_{v}(\Psi^{*}(\xi')).
\end{align*}

Now define a smooth local section $\sigma: U \rightarrow P_{\Gamma, u}$ as $\sigma = \widehat{exp}_{v} \circ exp_{x}^{-1}$. Then
$$\sigma' = f \circ \sigma  \circ \Psi^{-1} = f \circ \widehat{exp}_{v} \circ exp_{x}^{-1} \circ \Psi^{-1} = \widehat{exp}_{f(v)} \circ \Psi_{*} \circ exp_{x}^{-1} \circ \Psi^{-1}$$

is a smooth local section of $P'_{\Gamma', u'}$. Now let $\eta: U \times \Phi_{\Gamma, u} \rightarrow \tilde{\pi}^{-1}[U]$ be a local trivialization of $P_{\Gamma, u}$ such that $\eta^{-1}[\sigma[U]] = U \times \{e_G \}$, and let $\eta': \Psi[U] \times \Phi'_{\Gamma', u'} \rightarrow \tilde{\pi}'^{-1}[\Psi[U]]$ be a local trivialization of $P'_{\Gamma', u'}$ such that $\eta^{-1}[\sigma'[\Psi[U]]] = \Psi[U] \times \{e_{G'} \}$. Then we can write $f$ locally as
$$f_{\upharpoonright U} = \eta' \circ (\Psi \times \phi) \circ \eta^{-1} $$
since for all $w \in \tilde{\pi}^{-1}[U]$, we can write $w = yg$ for some $y \in \sigma[U]$. Then
\begin{align*}
 \eta' \circ (\Psi \circ \phi) \circ \eta^{-1}(w) &= \eta' \circ (\Psi \circ \phi) (\tilde{\pi}(w), g) \\
 &= \eta' (\Psi \circ \tilde{\pi}(w), \phi(g)) \\
 &=  \eta' (\Psi \circ \tilde{\pi}(w), e_{G'})\phi(g) \\
 &= \sigma'(\Psi \circ \tilde{\pi}(w))\phi(g) \\
 &= f \circ \sigma \circ \Psi^{-1}(\Psi \circ \tilde{\pi}(w))\phi(g) \\
 &= f \circ \sigma \circ \tilde{\pi}(w) \phi(g) \\
 &= f(y)\phi(g) = f(w).
\end{align*}

Since $v$ was arbitrary, $f$ is smooth everywhere. An analogous procedure can be performed for $f^{-1}$.\end{proof}

\section{Proof of Theorem \ref{thm:cat}}

We now prove the main result.  Again, we restate it first for convenience.

\begin{thm}$\mathbf{Hol}$ and $\mathbf{PC}$ are equivalent as categories, with an equivalence that preserves empirical content in the sense of preserving holonomy data.
\end{thm}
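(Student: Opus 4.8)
The plan is to exhibit a functor $\mathcal{R}\colon\mathbf{Hol}\to\mathbf{PC}$ whose action on objects is a choice of Barrett reconstruction, and then to show that $\mathcal{R}$ is essentially surjective and fully faithful; granting a base point per object (unproblematic under the universe conventions already adopted), this yields the equivalence, with the quasi-inverse and the natural isomorphisms extracted in the usual way. On objects: for each holonomy model $(M,H)$ with $H\colon L_x\to G$, use the Barrett reconstruction theorem to fix a principal bundle $G\to P_H\xrightarrow{\pi_H}M$, a connection $\Gamma_H$, and a point $u_H\in\pi_H^{-1}[x]$ with $H_{\Gamma_H,u_H}=H$, and set $\mathcal{R}(M,H):=(P_H,\Gamma_H)$. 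On arrows: given a holonomy isomorphism $(\Psi,\underline\alpha,\phi)\colon H\to H'$, the holonomy maps $H_{\Gamma_H,u_H}=H$ and $H_{\Gamma_{H'},u_{H'}}=H'$ are isomorphic, so the construction carried out in the proof of Theorem~\ref{thm:PBiso} (applied to a representative $\alpha\in\underline\alpha$) produces a connection-preserving principal bundle isomorphism $(F,\Psi,\phi)\colon(P_H,\Gamma_H)\to(P_{H'},\Gamma_{H'})$, which we take to be $\mathcal{R}(\Psi,\underline\alpha,\phi)$.

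First I would verify that $\mathcal{R}$ is well defined on arrows: changing the representative $\alpha\in\underline\alpha$, or the auxiliary choices internal to the construction of Theorem~\ref{thm:PBiso}, alters $F$ only by a connection-preserving gauge transformation of $P_{H'}$, and one shows this transformation is forced to be trivial using that any two curves in $\underline\alpha$ induce the same base-point change of $H$ (with Lemma~\ref{lemma2}). Functoriality — preservation of identities and composites — then falls out of comparing the composition law on holonomy isomorphisms, $(\Psi',\underline{\alpha'},\phi')\circ(\Psi,\underline\alpha,\phi)=(\Psi'\circ\Psi,\underline{\alpha\bullet(\Psi^{-1}\circ\alpha')},\phi'\circ\phi)$, with concatenation of parallel transports via Lemma~\ref{lemma1}(a) and equivariance via Lemma~\ref{lemma1}(d).

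Essential surjectivity is immediate: for any object $(P,\Gamma)$ of $\mathbf{PC}$ and any $u\in P$, the objects $(P,\Gamma)$ and $\mathcal{R}(M,H_{\Gamma,u})$ both have $H_{\Gamma,u}$ as the holonomy map at a base point, so Theorem~\ref{thm:PBiso} (or the Barrett representation theorem) supplies a connection-preserving isomorphism between them. For fullness, given a connection-preserving isomorphism $(F,\Psi,\phi)\colon\mathcal{R}(M,H)\to\mathcal{R}(M',H')$, I would run the construction backwards: parallel transport relating $F(u_H)$ to $u_{H'}$ produces, via Lemmas~\ref{lemma1} and~\ref{lemma2}, a curve class $\underline\alpha$ in $M$ and, after a compensating conjugation, a group isomorphism $\phi'$, and the triple $(\Psi,\underline\alpha,\phi')$ is a holonomy isomorphism $H\to H'$ whose image under $\mathcal{R}$ is $(F,\Psi,\phi)$. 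For faithfulness I would use the rigidity of connection-preserving bundle maps over a connected base: such a map is determined by $\Psi$, by $\phi$, and by the image of a single point of the holonomy sub-bundle $P_{\Gamma,u}$ — everything else being fixed by parallel transport (Lemma~\ref{lemma1}(a)--(c)) and equivariance (Lemma~\ref{lemma1}(d)) — and this data is recoverable from the associated holonomy isomorphism, so distinct arrows of $\mathbf{PC}$ have distinct images.

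The step I expect to be the main obstacle is the base-point bookkeeping underlying both well-definedness on arrows and fullness: a holonomy map lives relative to a chosen $u\in P$, whereas a $\mathbf{Hol}$-isomorphism records a change of base point only through $\underline\alpha$ and a possible conjugation of $\phi$, so one must show that the representative curves, connecting paths in $M'$, and attendant group elements appearing in the two constructions either cancel or are pinned down by the coherence already built into Definition~\ref{iso}. Theorem~\ref{thm:PBiso} is precisely the input that makes the fullness half go through; granting it, what remains is a careful but essentially routine compatibility check, after which the equivalence follows, and its preservation of holonomy data is manifest since $H_{\Gamma,u}$ simply \emph{is} the holonomy data carried by $(P,\Gamma)$ at $u$.
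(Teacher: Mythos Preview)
Your overall strategy---define the functor on objects by Barrett reconstruction and on arrows by the construction of Theorem~\ref{thm:PBiso}, then verify that it is full, faithful, and essentially surjective---is exactly the paper's approach, and your treatments of functoriality and essential surjectivity match the paper's. The execution diverges at two points.

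For faithfulness, the paper's argument is more direct than your rigidity argument and does not use the bundle at all: since $\mathcal{R}(\Psi,\underline\alpha,\phi)$ has $\Psi$ and $\phi$ as its second and third components, equality of images in $\mathbf{PC}$ immediately gives $\Psi=\Psi'$ and $\phi=\phi'$; and once those agree, the very definition of the equivalence class $\underline\alpha$ in Definition~\ref{iso} forces $\underline\alpha=\underline{\alpha'}$, because both representatives satisfy $\phi\circ H(\gamma)=H'(\Psi\circ(\alpha^{-1}\bullet\gamma\bullet\alpha))$ for all $\gamma$. (Also, your final clause ``distinct arrows of $\mathbf{PC}$ have distinct images'' has the categories reversed.)

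For fullness there is a genuine gap in your sketch. Your ``compensating conjugation'' replaces $\phi$ by some conjugate $\phi'$, but then $\mathcal{R}(\Psi,\underline\alpha,\phi')$ is an arrow of $\mathbf{PC}$ whose third component is $\phi'$, not $\phi$, so you have not hit the given arrow $(F,\Psi,\phi)$. The paper avoids this by choosing the curve, not the group isomorphism: it selects $\alpha$ with $\alpha(0)=\Psi^{-1}(x')$, $\alpha(1)=x$, and $\hat\alpha_{F^{-1}(u_{H'})}(1)=u_H$, which makes the would-be conjugating element trivial and allows one to keep $\phi$ unchanged; the verification that $\mathcal{R}(\Psi,\underline\alpha,\phi)=(F,\Psi,\phi)$ then goes through directly.
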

\begin{proof}
Let $C: \mathbf{Hol} \rightarrow \mathbf{PC}$ be a functor that takes holonomy maps $H:L_x \rightarrow G$ on a manifold $M$ to a principal bundle  $G \rightarrow P \overset{\pi}{\rightarrow} M$ and principal connection $\Gamma$ given by the Barrett reconstruction theorem---i.e., to a bundle and connection $(G\rightarrow P\xrightarrow{\pi} M,\Gamma)$ such that there exists a point $u\in \pi^{-1}[x]$ satisfying $H_{\Gamma,u}=H$---and takes a holonomy isomorphism $(\Phi, \underline{\alpha}, \phi)$ to the principal bundle isomorphism $(F, \Psi, \phi): C(H_{\Gamma,u}) \rightarrow C(H'_{\Gamma',u'})$ given in the proof of Theorem~\ref{thm:PBiso}.  First, note that $C$ clearly preserves holonomy data, and thus preserves empirical content in the required sense.  We will first show that $C$ is indeed a functor, and then show that $C$ is one half of an equivalence, by showing it is full, faithful, and essentially surjective.

First, it is clear from the definition of $F$ that $C((\Psi, \underline{\alpha}, \phi): H \rightarrow H') = (F, \Psi, \phi): C(H) \rightarrow C(H')$. It remains to show that $C(id_H) = id_{C(H)}$ and that $C(g \circ f) = C(g) \circ C(f)$ for any arrows $f: H \rightarrow H'$ and $g: H' \rightarrow H''$ of $\mathbf{Hol}$.  So let $H$ be an arrow of $\mathbf{Hol}$, suppose $C(H) = (G \rightarrow P \overset{\pi}{\rightarrow} M, \Gamma)$, and suppose $u\in\pi^{-1}[x]$ is such that $H_{\Gamma,u}=H$. Then $C(id_H) = C((id_M, \underline{id_{\pi(u)}}, id_G )) = (id_P, id_M, id_G) = id_{C(H)}$.  Thus identities are preserved.  Now let $(\Psi, \underline{\alpha}, \phi): H \rightarrow H'$ and $(\Psi', \underline{\alpha}', \phi'): H' \rightarrow H''$ be isomorphisms of holonomy maps $H:L_x \rightarrow G$, $H':L_{x'}\rightarrow G'$ and $H'':L_{x''}\rightarrow G''$.  Let $(P, \Gamma)$, $(P', \Gamma')$, and $(P'', \Gamma'')$ be the corresponding principal bundles and connections in the Barrett construction, and let $u\in \pi^{-1}[x]$, $u'\in\pi'^{-1}[x']$, and $u''\in\pi''^{-1}[x'']$ be such that $H = H_{\Gamma, u}$, $H = H_{\Gamma', u'}$, and $H = H_{\Gamma'', u''}$, respectively.  Then
\begin{align*}
&C((\Psi, \alpha, \phi) \circ (\Psi', \alpha', \phi')) : H \rightarrow H'' \\
&= C(\Psi' \circ \Psi, \alpha \bullet (\Psi^{-1} \circ \alpha' ), \phi' \circ \phi): H \rightarrow H''\\
&= (F'', \Psi' \circ \Psi, \phi' \circ \phi): C(H) \rightarrow C(H'')
\end{align*}
Where for $v \in P$, if $v = xg$ for $x \in P_{\Gamma, u}$, $g \in G$, then
\begin{align*}
F''(v) &= T_{\Gamma'', \Psi^{-1} \circ \Psi (\alpha \bullet (\Psi^{-1} \circ \alpha'))}(u'')(\phi' \circ \phi)(g) \\
&= F' (f(x)\phi(g)) = (f' \circ f)(x)(\phi' \circ \phi)(g)\\
&= F' \circ F(v)
\end{align*}

We now show that $C$ is full, faithful, and essentially surjective. Let $H:L_x \rightarrow G$ and $H':L_{x'}\rightarrow G'$ be objects of $\mathbf{Hol}$, and suppose $C(H) = (G \rightarrow P \overset{\pi}{\rightarrow} M, \Gamma)$, $C(H') = (G' \rightarrow P' \overset{\pi'}{\rightarrow} M', \Gamma')$, where $u \in \pi^{-1}[x]$ and $u' \in \pi'^{-1}[x']$ are such that $H = H_{\Gamma, u}$ and $H' = H_{\Gamma', u'}$. Suppose there is an isomorphism $(F', \Psi, \phi): (P, \Gamma) \rightarrow (P', \Gamma')$ of the principal bundles and connections. Let $\alpha$ be a piece-wise smooth curve in $M$ such that $\alpha(0) = \Psi^{-1}(x')$, $\alpha(1) = x$, and $\hat{\alpha}_{F'^{-1}(u')}(1) = u$. We claim that $C((\Psi, \underline{\alpha}, \phi)) = (F', \Psi, \phi)$. For let $(F, \Psi, \phi)$ be the isomorphism corresponding to $(\Psi, \underline{\alpha}, \phi)$ given in Theorem~\ref{thm:PBiso}, and suppose $v \in P$ is such that $v = yg$ for some $y \in P_{\Gamma, u}$ and $g \in G$. Then $F(v) = T'_{\Gamma', \Psi \circ \beta_y}(u')\phi(g)$ for some piece-wise smooth curve $\beta_y : [0, 1] \rightarrow M$ such that $\beta_y (0) = \Psi^{-1}(x')$ and $\widehat{\beta_y}_z (1) = y$, where $\widehat{\beta_y}_z$ is the lift of $\beta_y$ through $z = T_{\Gamma, \alpha^{-1}}(u) = F'^{-1}(u')$.
Thus $T'_{\Gamma', \Psi \circ \beta_y}(u') = \widehat{\Psi \circ \beta_y}_{u'}(1) = F'(\beta_y (1)) = F'(y)$ by the definition of principal connection. Thus $F(v) = F'(y)\phi(g) = F'(v)$, so $C(\Psi, \underline{\alpha}, \phi) = (F', \Psi, \phi)$. So $C$ is full.

Now suppose there are two holonomy isomorphisms $(\Psi, \underline{\alpha}, \phi)$ and $(\Psi', \underline{\alpha}', \phi'): H \rightarrow H'$ which are such that $C(\Psi, \underline{\alpha}, \phi) = C(\Psi', \underline{\alpha}', \phi') = (F, \Psi'', \phi'')$. Then by the definition of $C$ on arrows, $\Psi = \Psi' = \Psi''$ and $\phi = \phi' = \phi''$. Thus for all $\gamma \in L_x$,
$$H'(\Psi\circ(\alpha^{-1} \bullet \gamma \bullet \alpha)) = \phi\circ H (\gamma) = \phi'\circ H(\gamma) = H'(\Psi\circ(\alpha'^{-1} \bullet \gamma \bullet \alpha')) $$
 Thus $\underline{\alpha}=\underline{\alpha'}$, and so $(\Psi,\underline{\alpha},\phi)=(\Psi',\underline{\alpha'},\phi')$ and $C$ is faithful.  Finally, let  $G \rightarrow P \overset{\pi}{\rightarrow} M$ be a principal bundle with connection $\Gamma$, $(P, \Gamma) \in \mathbf{PC}$. Then $C(H_{\Gamma, u}) = (P, \Gamma)$ for some $u \in P$.  So $C$ is essentially surjective.
\end{proof}

\section{Discussion}\label{sec:discussion}

We have now proved the main results of the paper.  In particular, Theorem \ref{thm:cat} establishes that on at least one construal of the category of holonomy models, $\mathbf{Hol}$ and $\mathbf{PC}$ are equivalent.  This captures one sense in which one might think that no structure is lost in moving between principal bundle and loop formulations of Yang-Mills theory; one might also take it to capture a sense in which these formalisms are equivalent, by virtue of having the capacity to represent just the same physics.

We conclude with two comments.  The first is to note a relationship to the \citeyear{Schreiber+Waldorf} result of \citet{Schreiber+Waldorf}, who showed that category we call $\mathbf{PC}$ is equivalent to a category $\mathbf{Trans}$ of parallel transport maps and suitably defined arrows.  It follows that $\mathbf{Hol}$ is also equivalent to $\mathbf{Trans}$, and that $\mathbf{Hol}^*$ is not equivalent to any of these other categories, at least in a physically interesting way.  Thus the results presented might thus be seen provide a broader picture of the categorical relationships between three important ways of characterizing models of Yang-Mills theory.  The situation is summarized in the following Figure \ref{fig1}.

\begin{figure}[!h]
\centering
\includegraphics[width=.7\linewidth]{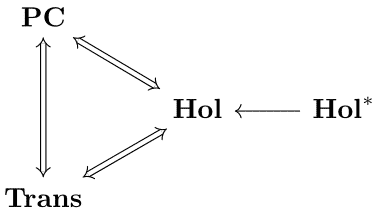}
\caption{A representation of the categorical relationships between the category $\mathbf{PC}$ of principal bundles with principal connections, the category $\mathbf{Trans}$ of parallel transport maps, and two possible choices of categories of holonomy maps. Double arrows denote categorical equivalence, while single arrows denote the existence of a quotient functor that does not split. }
\label{fig1}
\end{figure}

The second comment concerns how to understand the relationships just sketched in terms of the Baez-Dolan-Bartels classification for forgetful functors.  Of course, the functors realizing equivalences forget nothing, and so there is an important sense in which $\mathbf{PC}$, $\mathbf{Hol}$, and $\mathbf{Trans}$ all encode precisely the same information---i.e., they can be related by functors that forget nothing.  As we have noted, however, the functor $Q:\mathbf{Hol}^*\rightarrow\mathbf{Hol}$ is not essentially invertible: it is full and surjective, but not faithful.  Thus it forgets (only) ``stuff''.  (What stuff?  It is not clear that a clean answer is available, but our intuition is that we are forgetting unnecessary information about the base point.)  Likewise, if $F:\mathbf{Hol}\rightarrow\mathbf{PC}$ is the functor that realizes the equivalence in Theorem \ref{thm:cat}, then $F\circ Q:\mathbf{Hol^{*}}\rightarrow\mathbf{PC}$ also forgets only stuff.

It is in this context that Prop. \ref{split} becomes particularly interesting.  There we show that $Q$ does not split.  If it \emph{did} split, then there would be a functor $K:\mathbf{Hol}\rightarrow\mathbf{Hol}^*$ that would be faithful and essentially surjective, but not full---i.e., it would forget (only) ``structure''.  It would follow that there would be a functor $K\circ F^{-1}:\mathbf{PC}\rightarrow \mathbf{Hol^{*}}$ that also forgot only structure.  One might then argue that there is a sense in which the holonomy formalism has less structure than the principal bundle formalism, provided one could argue that $\mathbf{Hol^{*}}$ is otherwise preferable to $\mathbf{Hol}$.

It is this argument that is blocked by Prop. \ref{split}, removing the worry that choosing $\mathbf{Hol}$ over $\mathbf{Hol^{*}}$ somehow ``adds'' structure (in the Baez-Dolan-Bartel sense) in a way the undermines the significance of Theorem \ref{thm:cat}.  In other words, if we chose to work with $\mathbf{Hol^{*}}$ instead of $\mathbf{Hol}$, it would seem that the holonomy formalism would have more ``stuff'' than the principal bundle formalism, and no less structure.  This helps clarify what is at stake in choosing between $\mathbf{Hol}$ and $\mathbf{Hol^{*}}$; it also gives some reason to doubt that either choice of category will help someone who believes the holonomy formalism is somehow more parsimonious.

That said, there is some sense in which these final considerations are beside the point.  One might have thought that the question of real interest was whether or not the principal bundle formalism allows us to describe physical situations that, by the lights of that theory, properly construed, are somehow physically inequivalent---say because they require us to make a choice between different, inequivalent bundle structures---but which nonetheless correspond to the same holonomy data. One might then think that the principal bundle formalism has some sort of ``excess structure'', such that we would need to posit a new form of isomorphism between principal bundles with connections, analogous to the gauge transformations one introduces in classical electromagnetism, to remove that structure.\footnote{For more on this way of thinking about this issue, see \citet{WeatherallTE} and \citet{WeatherallGauge}.}  But if that were the worry, then Theorem \ref{thm:PBiso} substantially settles the issue, since it establishes that given an equivalence class of holonomy models, in the sense of $\mathbf{Hol}$ \emph{or} $\mathbf{\Hol}^*$, then there is a unique principal bundle and principal connection with the appropriate structure group that gives rise to those holonomy models.

\section*{Acknowledgments}
This material is based upon work supported by the National Science Foundation under Grant No. 1331126.  Thank you to David Malament and two anonymous referees for helpful comments on a previous draft and to audiences at the Southern California Philosophy of Physics Group, the 2014 Irvine-Pittsburgh-Princeton Conference on the Mathematical and Conceptual Foundations of Physics, and the Munich Center for Mathematical Philosophy for comments and discussion.


\begin{thebibliography}{26}%
\makeatletter
\providecommand \@ifxundefined [1]{%
 \@ifx{#1\undefined}
}%
\providecommand \@ifnum [1]{%
 \ifnum #1\expandafter \@firstoftwo
 \else \expandafter \@secondoftwo
 \fi
}%
\providecommand \@ifx [1]{%
 \ifx #1\expandafter \@firstoftwo
 \else \expandafter \@secondoftwo
 \fi
}%
\providecommand \natexlab [1]{#1}%
\providecommand \enquote  [1]{``#1''}%
\providecommand \bibnamefont  [1]{#1}%
\providecommand \bibfnamefont [1]{#1}%
\providecommand \citenamefont [1]{#1}%
\providecommand \href@noop [0]{\@secondoftwo}%
\providecommand \href [0]{\begingroup \@sanitize@url \@href}%
\providecommand \@href[1]{\@@startlink{#1}\@@href}%
\providecommand \@@href[1]{\endgroup#1\@@endlink}%
\providecommand \@sanitize@url [0]{\catcode `\\12\catcode `\$12\catcode
  `\&12\catcode `\#12\catcode `\^12\catcode `\_12\catcode `\%12\relax}%
\providecommand \@@startlink[1]{}%
\providecommand \@@endlink[0]{}%
\providecommand \url  [0]{\begingroup\@sanitize@url \@url }%
\providecommand \@url [1]{\endgroup\@href {#1}{\urlprefix }}%
\providecommand \urlprefix  [0]{URL }%
\providecommand \Eprint [0]{\href }%
\providecommand \doibase [0]{http://dx.doi.org/}%
\providecommand \selectlanguage [0]{\@gobble}%
\providecommand \bibinfo  [0]{\@secondoftwo}%
\providecommand \bibfield  [0]{\@secondoftwo}%
\providecommand \translation [1]{[#1]}%
\providecommand \BibitemOpen [0]{}%
\providecommand \bibitemStop [0]{}%
\providecommand \bibitemNoStop [0]{.\EOS\space}%
\providecommand \EOS [0]{\spacefactor3000\relax}%
\providecommand \BibitemShut  [1]{\csname bibitem#1\endcsname}%
\let\auto@bib@innerbib\@empty
\bibitem [{\citenamefont {Wu}\ and\ \citenamefont {Yang}(1975)}]{Wu+Yang}%
  \BibitemOpen
  \bibfield  {author} {\bibinfo {author} {\bibfnamefont {T.~T.}\ \bibnamefont
  {Wu}}\ and\ \bibinfo {author} {\bibfnamefont {C.~N.}\ \bibnamefont {Yang}},\
  }\bibfield  {title} {\enquote {\bibinfo {title} {Concept of nonintegrable
  phase factors and global formulation of gauge fields},}\ }\href@noop {}
  {\bibfield  {journal} {\bibinfo  {journal} {Physical Review D}\ }\textbf
  {\bibinfo {volume} {12}},\ \bibinfo {pages} {3845--3857} (\bibinfo {year}
  {1975})}\BibitemShut {NoStop}%
\bibitem [{\citenamefont {Kobayashi}\ and\ \citenamefont
  {Nomizu}(1963)}]{Kobayashi+Nomizu}%
  \BibitemOpen
  \bibfield  {author} {\bibinfo {author} {\bibfnamefont {S.}~\bibnamefont
  {Kobayashi}}\ and\ \bibinfo {author} {\bibfnamefont {K.}~\bibnamefont
  {Nomizu}},\ }\href@noop {} {\emph {\bibinfo {title} {Foundations of
  Differential Geometry}}},\ Vol.~\bibinfo {volume} {1}\ (\bibinfo  {publisher}
  {Interscience Publishers},\ \bibinfo {address} {New York},\ \bibinfo {year}
  {1963})\BibitemShut {NoStop}%
\bibitem [{\citenamefont {Trautman}(1980)}]{Trautman}%
  \BibitemOpen
  \bibfield  {author} {\bibinfo {author} {\bibfnamefont {A.}~\bibnamefont
  {Trautman}},\ }\bibfield  {title} {\enquote {\bibinfo {title} {Fiber bundles,
  gauge fields, and gravitation},}\ }in\ \href@noop {} {\emph {\bibinfo
  {booktitle} {General Relativity and Gravitation}}},\ \bibinfo {editor}
  {edited by\ \bibinfo {editor} {\bibfnamefont {A.}~\bibnamefont {Held}}}\
  (\bibinfo  {publisher} {Plenum Press},\ \bibinfo {address} {New York},\
  \bibinfo {year} {1980})\ pp.\ \bibinfo {pages} {287--308}\BibitemShut
  {NoStop}%
\bibitem [{\citenamefont {Palais}(1981)}]{Palais}%
  \BibitemOpen
  \bibfield  {author} {\bibinfo {author} {\bibfnamefont {R.~S.}\ \bibnamefont
  {Palais}},\ }\href@noop {} {\emph {\bibinfo {title} {The Geometrization of
  Physics}}}\ (\bibinfo  {publisher} {Institute of Mathematics, National Tsing
  Hua University},\ \bibinfo {address} {Hsinchu, Taiwan},\ \bibinfo {year}
  {1981})\ \bibinfo {note} {available at http://vmm.math.uci.edu/.}\BibitemShut
  {Stop}%
\bibitem [{\citenamefont {Bleecker}(1981)}]{Bleecker}%
  \BibitemOpen
  \bibfield  {author} {\bibinfo {author} {\bibfnamefont {D.}~\bibnamefont
  {Bleecker}},\ }\href@noop {} {\emph {\bibinfo {title} {Gauge Theory and
  Variational Principles}}}\ (\bibinfo  {publisher} {Addison-Wesley},\ \bibinfo
  {address} {Reading, MA},\ \bibinfo {year} {1981})\ \bibinfo {note} {reprinted
  by Dover Publications in 2005.}\BibitemShut {Stop}%
\bibitem [{\citenamefont {G\"ockeler}\ and\ \citenamefont
  {Sch\"ucker}(1987)}]{Gockeler+Schuker}%
  \BibitemOpen
  \bibfield  {author} {\bibinfo {author} {\bibfnamefont {M.}~\bibnamefont
  {G\"ockeler}}\ and\ \bibinfo {author} {\bibfnamefont {T.}~\bibnamefont
  {Sch\"ucker}},\ }\href@noop {} {\emph {\bibinfo {title} {Differential
  geometry, gauge theories, and gravity}}}\ (\bibinfo  {publisher} {Cambridge
  University Press},\ \bibinfo {address} {Cambridge},\ \bibinfo {year}
  {1987})\BibitemShut {NoStop}%
\bibitem [{\citenamefont {Deligne}\ and\ \citenamefont
  {Freed}(1999)}]{Deligne+Freed}%
  \BibitemOpen
  \bibfield  {author} {\bibinfo {author} {\bibfnamefont {P.}~\bibnamefont
  {Deligne}}\ and\ \bibinfo {author} {\bibfnamefont {D.~S.}\ \bibnamefont
  {Freed}},\ }\bibfield  {title} {\enquote {\bibinfo {title} {Classical field
  theory},}\ }in\ \href@noop {} {\emph {\bibinfo {booktitle} {Quantum Fields
  and Strings: A Course for Mathematicians}}},\ \bibinfo {editor} {edited by\
  \bibinfo {editor} {\bibfnamefont {P.}~\bibnamefont {Deligne}}, \bibinfo
  {editor} {\bibfnamefont {P.}~\bibnamefont {Etinghof}}, \bibinfo {editor}
  {\bibfnamefont {D.~S.}\ \bibnamefont {Freed}}, \bibinfo {editor}
  {\bibfnamefont {L.~C.}\ \bibnamefont {Jeffrey}}, \bibinfo {editor}
  {\bibfnamefont {D.}~\bibnamefont {Kazhdan}}, \bibinfo {editor} {\bibfnamefont
  {J.~W.}\ \bibnamefont {Morgan}}, \bibinfo {editor} {\bibfnamefont {D.~R.}\
  \bibnamefont {Morrison}}, \ and\ \bibinfo {editor} {\bibfnamefont
  {E.}~\bibnamefont {Witten}}}\ (\bibinfo  {publisher} {American Mathematical
  Society},\ \bibinfo {address} {Providence, RI},\ \bibinfo {year} {1999})\
  pp.\ \bibinfo {pages} {137--226}\BibitemShut {NoStop}%
\bibitem [{\citenamefont {Weatherall}(2015{\natexlab{a}})}]{WeatherallYM}%
  \BibitemOpen
  \bibfield  {author} {\bibinfo {author} {\bibfnamefont {J.~O.}\ \bibnamefont
  {Weatherall}},\ }\bibfield  {title} {\enquote {\bibinfo {title} {Fiber
  bundles, yang-mills theory, and general relativity},}\ }\href@noop {}
  {\bibfield  {journal} {\bibinfo  {journal} {Synthese}\ } (\bibinfo {year}
  {2015}{\natexlab{a}})},\ \bibinfo {note} {published online.
  DOI:10.1007/s11229-015-0849-3}\BibitemShut {NoStop}%
\bibitem [{\citenamefont {Mandelstam}(1962)}]{Mandelstam}%
  \BibitemOpen
  \bibfield  {author} {\bibinfo {author} {\bibfnamefont {S.}~\bibnamefont
  {Mandelstam}},\ }\bibfield  {title} {\enquote {\bibinfo {title} {Quantum
  electrodynamics without potentials},}\ }\href@noop {} {\bibfield  {journal}
  {\bibinfo  {journal} {Annals of Physics}\ }\textbf {\bibinfo {volume} {19}},\
  \bibinfo {pages} {1--24} (\bibinfo {year} {1962})}\BibitemShut {NoStop}%
\bibitem [{\citenamefont {Barrett}(1991)}]{Barrett}%
  \BibitemOpen
  \bibfield  {author} {\bibinfo {author} {\bibfnamefont {J.~W.}\ \bibnamefont
  {Barrett}},\ }\bibfield  {title} {\enquote {\bibinfo {title} {Holonomy and
  path structures in general relativity and yang-mills theory},}\ }\href@noop
  {} {\bibfield  {journal} {\bibinfo  {journal} {International Journal of
  Theoretical Physics}\ }\textbf {\bibinfo {volume} {30}},\ \bibinfo {pages}
  {1171--1215} (\bibinfo {year} {1991})}\BibitemShut {NoStop}%
\bibitem [{\citenamefont {Loll}(1994)}]{Loll}%
  \BibitemOpen
  \bibfield  {author} {\bibinfo {author} {\bibfnamefont {R.}~\bibnamefont
  {Loll}},\ }\bibfield  {title} {\enquote {\bibinfo {title} {Gauge theory and
  gravity in the loop formulation},}\ }in\ \href@noop {} {\emph {\bibinfo
  {booktitle} {Canonical Gravity: From Classical to Quantum}}},\ \bibinfo
  {editor} {edited by\ \bibinfo {editor} {\bibfnamefont {J.}~\bibnamefont
  {Ehlers}}\ and\ \bibinfo {editor} {\bibfnamefont {H.}~\bibnamefont
  {Friedrich}}}\ (\bibinfo  {publisher} {Springer},\ \bibinfo {address}
  {Berlin},\ \bibinfo {year} {1994})\ pp.\ \bibinfo {pages}
  {254--288}\BibitemShut {NoStop}%
\bibitem [{\citenamefont {Gambini}\ and\ \citenamefont
  {Pullin}(1996)}]{Gambini+Pullin}%
  \BibitemOpen
  \bibfield  {author} {\bibinfo {author} {\bibfnamefont {R.}~\bibnamefont
  {Gambini}}\ and\ \bibinfo {author} {\bibfnamefont {J.}~\bibnamefont
  {Pullin}},\ }\href@noop {} {\emph {\bibinfo {title} {Loops, Knots, Gauge
  Theories, and Quantum Gravity}}}\ (\bibinfo  {publisher} {Cambridge
  University Press},\ \bibinfo {year} {1996})\BibitemShut {NoStop}%
\bibitem [{Note1()}]{Note1}%
  \BibitemOpen
  \bibinfo {note} {Yet another approach is to characterize gauge theories
  directly using a generalized notion of the parallel transport maps induced by
  principal bundles with connections. See \protect \citet {Schreiber+Waldorf}
  and \cite {Dumitrescu}.}\BibitemShut {Stop}%
\bibitem [{Note2()}]{Note2}%
  \BibitemOpen
  \bibinfo {note} {\protect \citet {Caetano+Picken} prove an analogous result
  using a different definition of loop space, which they take to eliminate
  certain disadvantages of Barrett's construction. The results presented here
  do not depend on which definition is used.}\BibitemShut {Stop}%
\bibitem [{Note3()}]{Note3}%
  \BibitemOpen
  \bibinfo {note} {In what follows, we limit attention to manifolds that are
  smooth, Hausdorff, and paracompact, and will no longer state these
  assumptions explicitly.}\BibitemShut {Stop}%
\bibitem [{Note4()}]{Note4}%
  \BibitemOpen
  \bibinfo {note} {Everything discussed here is independent of the particular
  choice of reparameterization of the composition. The standard method is to
  define $\gamma _2 \bullet \gamma _1 (t) = \begin {cases} \gamma (2t) &
  \protect \text { for } t \leq \protect \frac {1}{2} \\ \gamma (2(t-\protect
  \frac {1}{2}) & \protect \text { for } t \geq \protect \frac {1}{2} \end
  {cases}$}\BibitemShut {NoStop}%
\bibitem [{Note5()}]{Note5}%
  \BibitemOpen
  \bibinfo {note} {Note that the (generalized) holonomy maps $H_{\Gamma , u}$
  determined in this way really do depend on the choice of $u\in P$, even for
  connected manifolds; changing base point, even within a fiber, yields a
  holonomy map that is conjugate in $G$ to the one we began with, so if
  $u_2=u_1 g$, then $H_{\Gamma ,u_2}(\gamma )=g^{-1}H_{\Gamma ,u_1}(\gamma )g$
  for any $\gamma \in L_{\pi (u_1)}$. In the sequel, we make precise a sense in
  which these are nonetheless isomorphic holonomy maps.}\BibitemShut {Stop}%
\bibitem [{Note6()}]{Note6}%
  \BibitemOpen
  \bibinfo {note} {For another version of this worry, used to question the
  significance of Barrett's result, see \protect \citet {Healey}.}\BibitemShut
  {Stop}%
\bibitem [{\citenamefont {Healey}(2007)}]{Healey}%
  \BibitemOpen
  \bibfield  {author} {\bibinfo {author} {\bibfnamefont {R.}~\bibnamefont
  {Healey}},\ }\href@noop {} {\emph {\bibinfo {title} {Gauging What's Real: The
  Conceptual Foundations of Contemporary Gauge Theories}}}\ (\bibinfo
  {publisher} {Oxford University Press},\ \bibinfo {address} {New York},\
  \bibinfo {year} {2007})\BibitemShut {NoStop}%
\bibitem [{Note7()}]{Note7}%
  \BibitemOpen
  \bibinfo {note} {See the discussion at
  http://math.ucr.edu/home/baez/qg-spring2004/discussion.html.}\BibitemShut
  {Stop}%
\bibitem [{\citenamefont {Schreiber}\ and\ \citenamefont
  {Waldorf}(2009)}]{Schreiber+Waldorf}%
  \BibitemOpen
  \bibfield  {author} {\bibinfo {author} {\bibfnamefont {U.}~\bibnamefont
  {Schreiber}}\ and\ \bibinfo {author} {\bibfnamefont {K.}~\bibnamefont
  {Waldorf}},\ }\bibfield  {title} {\enquote {\bibinfo {title} {Parallel
  transport and functors},}\ }\href@noop {} {\bibfield  {journal} {\bibinfo
  {journal} {Journal of Homotopy and Related Structures}\ }\textbf {\bibinfo
  {volume} {4}},\ \bibinfo {pages} {187--244} (\bibinfo {year}
  {2009})}\BibitemShut {NoStop}%
\bibitem [{Note8()}]{Note8}%
  \BibitemOpen
  \bibinfo {note} {For more on this way of thinking about this issue, see
  \protect \citet {WeatherallTE} and \protect \citet
  {WeatherallGauge}.}\BibitemShut {Stop}%
\bibitem [{\citenamefont {Dumitrescu}(2010)}]{Dumitrescu}%
  \BibitemOpen
  \bibfield  {author} {\bibinfo {author} {\bibfnamefont {F.}~\bibnamefont
  {Dumitrescu}},\ }\href@noop {} {\enquote {\bibinfo {title} {Connections and
  parallel transport},}\ } (\bibinfo {year} {2010}),\ \bibinfo {note}
  {unpublished discussion note.}\BibitemShut {Stop}%
\bibitem [{\citenamefont {Caetano}\ and\ \citenamefont
  {Picken}(1994)}]{Caetano+Picken}%
  \BibitemOpen
  \bibfield  {author} {\bibinfo {author} {\bibfnamefont {A.}~\bibnamefont
  {Caetano}}\ and\ \bibinfo {author} {\bibfnamefont {R.~F.}\ \bibnamefont
  {Picken}},\ }\bibfield  {title} {\enquote {\bibinfo {title} {An axiomatic
  definition of holonomy},}\ }\href@noop {} {\bibfield  {journal} {\bibinfo
  {journal} {International Journal of Mathematics}\ }\textbf {\bibinfo {volume}
  {5}},\ \bibinfo {pages} {835--848} (\bibinfo {year} {1994})}\BibitemShut
  {NoStop}%
\bibitem [{\citenamefont {Weatherall}(2015{\natexlab{b}})}]{WeatherallTE}%
  \BibitemOpen
  \bibfield  {author} {\bibinfo {author} {\bibfnamefont {J.~O.}\ \bibnamefont
  {Weatherall}},\ }\bibfield  {title} {\enquote {\bibinfo {title} {Are
  {N}ewtonian gravitation and geometrized {N}ewtonian gravitation theoretically
  equivalent?}}\ }\href@noop {} {\bibfield  {journal} {\bibinfo  {journal}
  {Erkenntnis}\ } (\bibinfo {year} {2015}{\natexlab{b}})},\ \bibinfo {note}
  {published online. DOI:10.1007/s10670-015-9783-5}\BibitemShut {NoStop}%
\bibitem [{\citenamefont {Weatherall}(2016)}]{WeatherallGauge}%
  \BibitemOpen
  \bibfield  {author} {\bibinfo {author} {\bibfnamefont {J.~O.}\ \bibnamefont
  {Weatherall}},\ }\bibfield  {title} {\enquote {\bibinfo {title}
  {Understanding ``gauge''},}\ }\href@noop {} {\bibfield  {journal} {\bibinfo
  {journal} {Philosophy of Science}\ } (\bibinfo {year} {2016})},\ \bibinfo
  {note} {forthcoming. arXiv:1505.02229 [physics.hist-ph]}\BibitemShut
  {NoStop}%
\end{thebibliography}
\end{document}